\newcommand{\V}{\mathbb{V}\mathrm{ar}}
\newcommand{\E}{\mathbb{E}}
\newcommand{\R}{\mathbb{R}}
\global\long\def\E{\mathbb{E}}
\begin{document}

\title{The True Cost of SGLD} 

\author{\name Tigran Nagapetyan \email nagapetyan@gmail.com \\
       \addr Department of Statistics\\
       University of Oxford\\
       24-29 St. Giles, OX13LB, Oxford
       \AND
       \name Andrew B. Duncan \email Andrew.Duncan@sussex.ac.uk \\
       \addr School of Mathematical and Physical Sciences\\University of Sussex, \\
	Sussex House, Falmer, Brighton, BN1 9RH
		\AND
       \name Leonard Hasenclever \email hasenclever@stats.ox.ac.uk \\
       \addr Department of Statistics\\
       University of Oxford\\
       24-29 St. Giles, OX13LB, Oxford
       \AND 
       \name Sebastian J. Vollmer \email svollmer@turing.ac.uk \\
       \addr Departments of Mathematics and Statistics\\
       University of Warwick\\ Gibbet Hill Rd, Coventry CV4 7AL
       \AND 
       \name Lukasz Szpruch \email lszpruch@staffmail.ed.ac.uk  \\
       \addr Department of Mathematics\\
       University of Edinburgh
       \AND 
       \name Konstantinos Zygalakis\email K.Zygalakis@ed.ac.uk \\
       \addr Applied Mathematics and Data Science\\
      University of Edinburgh}
  
\editor{TBA}

\maketitle

\begin{abstract}%   <- trailing '%' for backward compatibility of .sty file
The problem of posterior inference is central to Bayesian statistics and a wealth of Markov Chain Monte Carlo (MCMC) methods have been proposed to obtain asymptotically correct samples from the posterior. As datasets in applications grow larger and larger, scalability has emerged as a central problem for MCMC methods. Stochastic Gradient Langevin Dynamics (SGLD) and related stochastic gradient Markov Chain Monte Carlo methods offer scalability by using stochastic gradients in each step of the simulated dynamics. While these methods are asymptotically unbiased if the stepsizes are reduced in an appropriate fashion, in practice constant stepsizes are used. This introduces a bias that is often ignored. In this paper we study the mean squared error of Lipschitz functionals in strongly log-concave models with i.i.d. data of growing data set size and show that, given a batchsize, to control the bias of SGLD the stepsize has to be chosen so small that the computational cost of reaching a target accuracy is roughly the same for all batchsizes. Using a control variate approach, the cost can be reduced dramatically. The analysis is performed by considering the algorithms as noisy discretisations of the Langevin SDE which correspond to the Euler method if the full data set is used. An important observation is that the scale  of the step size is determined by the stability criterion if the accuracy is required for consistent credible intervals.
Experimental results confirm our theoretical findings.
\end{abstract}

\begin{keywords}
  MCMC, Stochastic Gradient Langevin Dynamics, Bayesian Inference,
\end{keywords}
\section{Introduction}
Bayesian statistics offers a principled way to reason about uncertainty and incorporate prior information. It naturally helps to prevent overfitting. Unfortunately these advantages come at a cost - exact sampling from the posterior is typically impossible and approximate methods are needed. Markov Chain Monte Carlo (MCMC) methods are an appealing class of methods for posterior sampling since they produce asymptotically exact results.  As datasets become ever larger and models become ever more complicated there is a demand for more scalable sampling techniques. One recently introduced class of methods is stochastic gradient MCMC (\cite{welling2011bayesian}). These methods generally use a discretisation of a stochastic differential equation (SDE) with the correct invariant distribution. For scalability, in every iteration stochastic gradients based on a subset of the data are used. \cite{Ma2015} provide a general framework for such samplers. It can be shown that these methods are asymptotically exact for decreasing stepsize schemes. However, in practice these methods are used with a constant stepsize, incurring a bias.

 Stochastic gradient MCMC methods have been applied across a large range of machine learning application such as matrix factorization models (\cite{Chen2014,Ding2014,Ahn2015}), topic models (\cite{Ding2014,Gan2015}) and neural networks (\cite{Li2016,Chen2014}). More recent work has sought to make samplers more robust to large stepsizes (\cite{LuPerHas2017}).  While SGMCMC produces state-of-the-art results in many applications there has been little work attempting to quantify the bias introduced by the stochastic gradients and discretisation, especially in the big data limit $N\rightarrow \infty $.
 
In statistics methods based on Euler discretisations of SDEs have not been popular in contrast to molecular dynamics, see \cite{leimkuhler2015molecular}. The reason for this goes back to \cite{roberts1996exponential}:
\begin{enumerate}
\item discretisations can be unstable (e.g.  particular Euler discretisation of Langevin SDE for light tailed distributions)
\item the step size affects the accuracy and is difficult to choose.
\end{enumerate}
For 1) there has been a lot of progress in terms of adaptive, (semi-implicit) and tamed schemes, see e.g. \cite{sabanis2013note,lamba2007adaptive}, 
In this article we address 2) and consider the simple case of globally Lipschitz drift coming from a strongly log-concave potential, see Section \ref{sec:assumptions}. There are two constraints on the step size. It needs to be small enough to ensure stability and small enough to lead to a bias on the right scale, see discussion around Equation \eqref{eq:accuracy}. 

In this paper we consider estimators of expectations of Lipschitz functions and study the computational cost of reaching a certain accuracy (as measured by the mean squared error) as the size of the dataset increases. Note that the required accuracy depends both on the functional we are interested in and the width of the posterior. For example if we are interested in estimating the posterior mean of a parameter then the natural scale is the standard deviation which will scale with the size of the dataset (typically as $N^{-\frac{1}{2}}$).
We show that SGLD is at most better by a constant factor relative to an Euler discretisation with full gradients. However we argue that in a big-data setting the dependence on the size of the data set dominates. This observation raises important questions about the use of stochastic gradient methods in more complicated models - does the good performance of stochastic gradient methods come from averaging slightly different models (similar to averaging over stochastic gradient descent) rather than faithful posterior simulation?

This paper is organised as follows: Section \ref{sec:background} briefly reviews SGLD and sets out our notation. In sections \ref{sec:results} and \ref{sec:assumptions} we summarise the main results of the paper and the assumptions used. Section \ref{EUSLCC} presents our results for the Euler discretisation of SGLD in strongly log-concave models and analyses in detail a Gaussian toy model. In section \ref{sec:euler_taylor} we analyse a different subsampling scheme. In section \ref{sec:experiments} we present numerical experiments for the Gaussian toy model and logistic regression. We conclude in section \ref{sec:conclusion}.

\section{Background and Main Results} \label{sec:background}
%\label{background}
We consider the problem of posterior inference in Bayesian statistics: let $X \in \mathbb{R}^{d}$ be a parameter vector where $\pi(X)$ denotes a prior distribution, and $\pi(y|X)$ the likelihood of an observation  $y$ is parametrized by $X$.  The posterior distribution of $X$ given a set of $N$ observations  $Y=\{y_{i}\}_{i=1}^{N}$ is given by
\begin{equation}\label{eq:bayes}
\pi(X|Y) \propto \pi(X) \prod_{i=1}^{N}\pi(y_{i}|X).
\end{equation}
In Bayesian statistics we are interested in computing expectations with respect to the posterior distribution. Since the posterior distribution is intractable in all but the simplest cases, approximate methods are needed. One popular approach is Markov Chain Monte Carlo (MCMC).  In recent years there has been growing interest in MCMC methods based on continuous dynamics using stochastic gradients. The simplest example of such dynamics is given by the Langevin equation:

%The following stochastic differential equation (SDE) is ergodic with respect to the posterior  $\pi(\theta|X)$ 
\begin{equation} \label{eq:langevin}
dX_{t}=\left( \nabla \log{\pi(X_{t})}+\sum_{i=1}^{N} \nabla \log{\pi(y_{i}|X_{t})} \right)dt+\sqrt{2}dW_{t}, \quad \theta_0\in\mathbb{R}^d
\end{equation}
where $W_{t}$ is a $d$-dimensional standard Brownian motion. Langevin dynamics are ergodic with respect to the posterior distribution $\pi(X|Y)$. In other words, the probability distribution of $X_t$ converges to $\pi(X|Y)$ as $t\rightarrow \infty$.
Thus, the simulation of \eqref{eq:langevin} provides an algorithm to sample from $\pi(X|Y)$. 
Since an explicit solution to \eqref{eq:langevin} is rarely known, we need to discretize it. An application of the Euler scheme yields
%\[
%\theta_{k+1}=S_{h,\xi_{k}}(\theta_{k}), \quad S_{h,\xi}(\theta):= \theta+h\left( \nabla \log{\pi(\theta)}+\sum_{i=1}^{N} \nabla \log{\pi(x_{i}|\theta)}\right)+\sqrt{2h}\xi
%\]
\begin{equation}
\theta_{k+1}= \theta_k+h\left( \nabla \log{\pi(\theta_k)}+\sum_{i=1}^{N} \nabla \log{\pi(y_{i}|\theta_k)}\right)+\sqrt{2h}\xi_k
\label{eq:EULER}
\end{equation}
where $\xi_{k}$ is a standard Gaussian random variable on $\mathbb{R}^{d}$. However, this algorithm is computationally expensive since it involves computations on all $N$ observations. The stochastic gradient Langevin dynamics algorithm (SGLD) circumvents this problem by replacing the sum of the $N$ log likelihood gradient terms by an random sum of $n \ll N$ terms sampled without replacement. In the following $n$ is called the batchsize. The update equation for SGLD is given by the following recursion formula
%\begin{equation} \label{eq:SGLD}
%\theta_{k+1}=S_{h,\tau^{k},\xi_{k}}(\theta_{k}), \quad S_{h,\tau,\xi}(\theta):=\theta+h\left( \nabla \log{\pi(\theta)}+\frac{N}{n}\sum_{i=1}^{n} \nabla \log{\pi(x_{\tau_{i}}|\theta)}\right)+\sqrt{2h}\xi
%\end{equation}
\begin{equation} \label{eq:SGLD}
\theta_{k+1} = \theta_k+h\left( \nabla \log{\pi(\theta_k)}+\frac{N}{n}\sum_{i=1}^{n} \nabla \log{\pi\left(y_{\tau^{k}_{i}}|\theta_k\right)}\right)+\sqrt{2h}\xi_k
\end{equation}
where  $\tau^{k}_{s}$ is a random subset of $[N]=\{1,\cdots,N \}$, generated for example by sampling with or without replacement  from $[N]$. We also introduce a new version of \eqref{eq:SGLD}, with control variates
\begin{equation} \label{eq:SGLD:CV}
\theta_{k+1} = \theta_k+h\left( \nabla \log{\pi(\theta_k)}+\frac{N}{n}\sum_{i=1}^{n}\left( \nabla \log{\pi\left(y_{\tau^{k}_{i}}|\theta_k\right)}-\nabla \log{\pi\left(y_{\tau^{k}_{i}}|x^\ast\right)}\right)\right)+\sqrt{2h}\xi_k,
\end{equation}
where the $x^\ast$ is the mode of the posterior.

Practitioners have observed that  requirement that the \emph{injected noise} ($\sqrt{h}\xi$) should  of the same order as that of the stochastic gradient yields following \emph{back of the envelope} computation.
$$\V(\sqrt{2h}\xi)\asymp \V\left(h\left( \nabla \log{\pi(\theta)}+\frac{N}{n}\sum_{i=1}^{n} \nabla \log{\pi(x_{\tau_{i}}|\theta)}\right)\right).$$ This means that in the case of subsampling without replacement we want
\begin{gather}
h\asymp N^2\cdot h^2\cdot \frac{N-n}{N\cdot n}\Rightarrow n(1+hN)\asymp N^2h\Rightarrow n\asymp N^2h/2, \label{main:res}
\end{gather}
as we have $Nh<1$, which we will use in the rest of the paper.

The condition \eqref{main:res} we derive through the analysis of the bias and variance for the SGLD estimators, and we show, how it affects the overall cost of the algorithm in terms of target accuracies and number of observations $N$. This paper verifies that this intuition is in fact correct.

The condition $Nh<1$, as we consider an explicit Euler scheme, ensures that the scheme is numerically stable in mean (see \cite{2008stability} for precise definition). In simple terms it means that the expectation of numerical approximation converges to a steady state, as the number of steps goes to infinity.

Before we proceed, we consider a motivational example, where we study application of SGLD \eqref{eq:SGLD} to a simple Gaussian example. We consider the following one-dimensional linear Gaussian model, 
 \begin{align*}
 \label{eq:toymodel}
 \begin{aligned}
 \theta & \sim  \mathcal{N}(0,\sigma_{\theta}^{2}),\\
 y_{i} \,|\, \theta & \stackrel{i.i.d.}{\sim} \mathcal{N}(\theta,\sigma_{y}^{2}) &&\text{for $i=1,\ldots,N$.}
 \end{aligned}
 \end{align*}
 Due to conjugacy the posterior is tractable and given by 
 \begin{equation*}
 \pi=\mathcal{N}(\mu_{p},\sigma_{p}^{2})=\mathcal{N}\left(\frac{\sum_{i=1}^{N}y_{i}}{\frac{\sigma_{y}^{2}}{\sigma_{\theta}^{2}}+N},\left(\frac{1}{\sigma_{\theta}^{2}}+\frac{N}{\sigma_{y}^{2}}\right)^{-1}\right).\label{eq:SGaussianPosterior}
 \end{equation*}
 For this choice of $\pi$, the Langevin diffusion \eqref{eq:langevin} becomes,
 \begin{equation*}
 d\theta(t)=-\frac{1}{2}\left(\frac{\theta(t)-\mu_{p}}{\sigma_{p}^{2}}\right)dt+dW_{t},\label{eq:Lan_OU}.
 \end{equation*}
Note that this is a rescaled version of \eqref{eq:langevin} where $t'=2t$. We keep it here for simplicity of the upcoming presentation. 
In our experiments we used $\sigma^2_{x} = \sigma^2_{\theta} = 1$ for simplicity. 

The numerical discretisation with explicit Euler scheme with subsampling reads
 \begin{equation}
 \theta_{k+1}=(1-Ah)\theta_{k}+B_{k}h+\sqrt{h}\xi_{k},\label{eq:sgld_OU}
 \end{equation}
 where $\xi_{k}\stackrel{i.i.d.}{\sim}\mathcal{N}(0,1)$ and 
 $A  =  \frac{1}{2}\left(\frac{1}{\sigma_{\theta}^{2}}+\frac{N}{\sigma_{y}^{2}}\right)$, $B_{k}  = \frac{N}{n}\frac{\sum_{i=1}^{n}y_{\tau_{ki}}}{2\sigma_{y}^{2}},$
 where $\tau_k=(\tau_{k1},\cdots,\tau_{kn})$ denote a random subset of $[N]=\{1,\cdots,N\}$ 
 generated by sampling without replacement from
 $[N]$, independently for each $k$.  We note that the updates \eqref{eq:sgld_OU} will be stable only if $0\le 1-Ah<1$, that is, $0<h<1/A$. 
 %For sampling with replacement, we have
 %\[
 %\V(B)=\frac{1}{4\sigma_{y}^{4}}\frac{N}{n}\sum_{j=1}^{N}\left(y_{i}-\frac{1}{N}\sum_{i=1}^{N}y_{i}\right)^{2}
 %= \frac{1}{4\sigma_{y}^{4}}\frac{N(N-n)}{n}\V(y),
 %\]
   For sampling without replacement we have,
  \begin{align}
  \V(B)&=\frac{1}{4\sigma_{y}^{4}}\frac{N(N-n)}{n(N-1)}\sum_{i=1}^{N}\left(y_{i}-\frac{1}{N}\sum_{i=1}^{N}y_{i}\right)^{2}
  =\frac{1}{4\sigma_{y}^{4}}\frac{N(N-n)}{n}\V(y) \label{eq:VarBToy}
  \end{align}
  where $\V(y)$ is the usual unbiased empirical estimate of the variance of $\{y_1,\ldots,y_N\}$.  
We consider first consider the problem of posterior mean estimation. 

The bias at the step $M$ has the form
 \begin{gather}
 \left|\E\left(\E(\theta_{M}|B)-\frac{\sum_{i=1}^{N}y_{i}}{\frac{\sigma_{y}^{2}}{\sigma_{\theta}^{2}}+N}\right)\right|=\left|(1-Ah)\E\left(\E\left(\theta_{M-1}|B\right)\right)+h\E B-\frac{\E B}{\frac{\sigma_{y}^{2}}{\sigma_{\theta}^{2}}+N}\right|\notag\\
 = \left|(1-Ah)^{M}\E(\theta_{0})-\frac{(1-Ah)^{M}}{A}\E B \right|=(1-Ah)^{M}\left|\E(\theta_{0})-\frac{\E B}{A}\right|.\label{eq:meanbias}
 \end{gather}
 As $M$ goes to infinity, the bias vanishes. Notice that the bias is independent of the variance of $B$. As long as we use unbiased stochastic gradients our estimate of the posterior mean will be asymptotically unbiased. In particular, to obtain an unbiased mean estimate we only require the Euler scheme to be stable i.e. $Nh<1$.

The variance of the posterior tells a different story. Starting with the law of total variance, we have
 $\V[\theta_{M}]=\E(\V[\theta_{M}| B])+\V(\E[\theta_{M}| B]).$
 The two terms obey the following recurrence relations:
 \begin{gather*}
 \V[\theta_{M}\mid B]=(1-Ah)^{2}\V[\theta_{M-1}| B]+h,\\
 \V(\E[\theta_{M}| B])=(1-Ah)^{2}\V(\E[\theta_{M-1}| B])+h^{2}\V(B).
 \end{gather*}
 Combining these two results and a law of total variance for $\V(\theta_{M-1})$, we see that 
 \begin{gather}
 \V(\theta_{M})=(1-Ah)^{2}\V(\theta_{M-1})+h+h^{2}\V(B)\Rightarrow\notag\\
 \V(\theta_{M})=
 \left(\frac{1}{2A-A^2h}+\frac{h\V(B)}{2A-A^2h}+\frac{(1-Ah)^{2}}{2A-A^2h}\V(\theta_{0})\right)\left(1-(1-Ah)^{2M}\right).
 \label{eq:meanvariance}
 \end{gather}
 Using \eqref{eq:meanbias} and \eqref{eq:meanvariance} the overall MSE for the estimator of the form \eqref{eq:SMC} with $P$ paths can be written as
 \begin{gather*}
 (1-Ah)^{2M}\left(\E(\theta_{0})-\frac{\E B}{A}\right)^2+\frac{1}{P}\left(\frac{1}{2A-A^2h}+\frac{h\V(B)}{2A-A^2h}+\frac{(1-Ah)^{2}}{2A-A^2h}\V(\theta_{0})\right)\left(1-(1-Ah)^{2M}\right)\\
 =(1-Ah)^{2M}\left(\left(\E(\theta_{0})-\frac{\E B}{A}\right)^2-\frac{\mathcal{V}}{P}\right)+\frac{\mathcal{V}}{P},\\
 \text{where }\mathcal{V}=\frac{1}{2A-A^2h}\left(1+h\V(B)+(1-Ah)^{2}\V(\theta_{0})\right),
 \end{gather*}
 so the variance will depend on the batchsize, which we will confirm with the numerical experiments in Section \ref{sec:experiments}. The fact that $\V(B)$ increases significantly, i.e. proportionally to $\frac{N^2}{n}$ necessitates a careful cost analysis. Moreover one might expect that an appropriate control variate would mitigate this variance contribution.

\subsection*{Additional Notation}
We write $a\preceq b$, if there exists a constant $c$, that does not depend on the parameters of interest, such that
$a\leq c\cdot b$. Moreover, $a \succeq b$ means $b\preceq a$, and $a\asymp b$ stands for $a\preceq b$ and $b\preceq a$.

\subsection{Main Results}\label{sec:results}
We consider an estimator  for $\pi(f)$ based on $P$ paths simulated to time $T$ with step size $h$ 
 \begin{equation}
 \label{eq:SMC}
 \mathcal{M}_{P,T,h}(f) = \frac{1}{P}\sum\limits_{i=1}^P f(\theta_{h,T}),
 \end{equation}
 where $h$ is the discretisation step in Euler approximation. Note that this is different from the classical ergodic average, see Remark \ref{rem:ergodicavg}. The article \cite{2016arXiv160501559D} does consider the ergodic average but does not study the limit $N\rightarrow\infty$.
We are interested in quantifying the computational cost, defined here as the expected number of operations, performed by the algorithm. Given a prescribed accuracy $\epsilon$ and a number of paths $P(\epsilon)$, an integration time $T(\epsilon)$, a stepsize $h(\epsilon)$, and a batchsize $n(\epsilon)$ such that
 \begin{equation}
 \E  \left(  \mathcal{M}_{P,T,h}(f)  -\pi(f )\right) \leq \epsilon^2 \label{eq:MSE}
 \end{equation}
 the our cost estimate is given by
 \begin{equation}\label{eq:cost}
\text{cost}_\epsilon(\mathcal{M}_{P,T,h}(f)):=   P(\epsilon)\cdot T(\epsilon)/h(\epsilon)\cdot n(\epsilon)
 \end{equation}
 We say, that the algorithm $\mathcal{M}_{P,T,h}(f)$ converges with rate $\gamma>0$, if there exist constants $c$ and $\eta$, which are independent of the hyper-parameters of the algorithm, such that
 $$\text{cost}_\epsilon(\mathcal{M}_{P,T,h}(f))\le c\cdot \epsilon^{-\gamma}\cdot \left(-\log(\epsilon)\right)^\eta.$$
 Notice, that in this notation the smaller the value of $\gamma$ the better the algorithm's performance.

What accuracy do we need? Credible intervals are typically of
the form 
\[
\left(\mu_{p}-\kappa\sigma_{p},\mu_{p}+\kappa\sigma_{p}\right).
\]
If we replace the exact posterior mean by an estimate we would like
to ensure that with confidence $\alpha$
\begin{equation}\label{eq:accuracy}
\mathbb{P}\left(\left(\mu_{p}-\kappa\sigma_{p},\mu_{p}+\kappa\sigma_{p}\right)\subset\left(\hat{\mu}_{p}-\kappa\hat{\sigma}_{p},\hat{\mu}_{p}+\kappa\hat{\sigma}_{p}\right)\right)\ge1-\alpha.
\end{equation}
For this reason the \textbf{root mean square error }of $\hat{\mu}_{p}$
should be of order $\sigma_{p}$ and the root mean square of $\hat{\sigma}$
should be of order $\sigma_p$. In regular cases, we expect 

\begin{equation}
 \sigma_p \asymp \frac{1}{\sqrt{N}} \label{eq:poststdrate}.
\end{equation}
By considering the limit as the number of data items $N\rightarrow \infty$ we will see that the stability condition dominates the scale of the step size. In fact if one uses Richardson Romberg extrapolation (or a higher method) the accuracy condition on that scale becomes negligible, as the timesteps size, required by the stability condition, can be much smaller, than the target accuracy $\epsilon$.

The main results of our analysis are the following:
\begin{enumerate}
\item[(A1)] We study in detail convergence of  Euler \eqref{eq:EULER}, Euler with naive subsampling \eqref{eq:SGLD} (SGLD) and Euler with control variate subsampling \eqref{eq:SGLD:CV} (SGLD with CV). We provide explicit analysis for the bias error and its dependence on number of observations $N$
\item[(A2)] We also show, that for the accuracies of interest, namely $\epsilon\asymp\frac{1}{\sqrt{N}}$, the overall cost is proportional to $N \log N$, see Theorems \ref{theorem:compl} and \ref{theorem:compl2}. Moreover, disregarding the cost of finding the mode SGLD with CV achieves $\log (N)$, see Theorem \ref{thm:sgldcv}.
\item[(A3)] Given the dataset of size $N$ and a target accuracy $\epsilon$ the minimum minibatch size $n$ in \eqref{eq:SGLD} required to reach the target accuracy is $n \asymp N^2h\asymp N^2\min(\epsilon,N^{-1})$, which leads to no gain in required computational cost to achieve tolerance of interest  $\epsilon=\frac{1}{\sqrt{N}}$, see Figure \ref{fig:rmse_v_batchsize}. 
\item[(A4)] For control variate subsampling, given by \eqref{eq:SGLD:CV}, we have a complexity gain for $\sqrt{N}\preceq \epsilon^{-1}\preceq N$ over  Euler \eqref{eq:EULER} and \eqref{eq:SGLD} schemes. This scheme relies heavily on knowing exactly the mode of the posterior density , which in practice is not the case. On the other hand, our analysis indicates that the usage of computable control variates can lead to substantial gains in a very high accuracy demand regimes.
\end{enumerate}

Ergodicity properties of SDEs is a well studied area, see for \cite{mattingly2002ergodicity}, \cite{2010MattinglyPoisson}. However, here we look at the exact MSE properties for the estimator based on independent paths.

\begin{remark}\label{rem:ergodicavg}
 Note that this is not the classic MCMC estimator but rather an estimator based on many independent simulations of \eqref{eq:SMC}. While we are planning to look at traditional estimators based on ergodic averages in future work, for now we are focusing on the behavior of the computational cost as the size of the data set becomes large. Due to the insufficient burn-in the bias of ergodic averages will always be larger than for the estimator considered here. In addition, as we will show later on, for the models studied in this article we need a simulation time $T=\mathcal{O}(N^{-1})$ and a stepsize $h=\mathcal{O}(N^{-1})$ as $N$ increases and value of $\epsilon$ fixed. This means that we simulate constantly many steps $S$. Since the samples will generally be positively correlated we can only hope to reduce the variance by at most $S^{-1}=\mathcal{O}(1)$. This will not affect the scaling with $N$.
 For an ergodic average estimator in \cite{2010MattinglyPoisson} on bounded domain and later in \cite{VolZygTeh2016a} on unbounded domains it has been verified (for sufficiently regular $f$ and some stringent conditions on $\pi$ provided the discretisation is stable) that for the chain on length $N$ one has
$$
 MSE(\hat{\pi}(f)) \leq C_1 h^2+C_2 \frac{1}{M \cdot h},
$$
thus making the overall cost proportional to $\epsilon^{-3}$. If our findings here transfer to the ergodic average (which we strongly believe), then our results show that this trade off does not apply to accuracy regime of interest.
 \end{remark}

\subsection{Assumptions} \label{sec:assumptions}
We consider 
\begin{equation}
\label{eq:diffusion:U}
dX_{t}=  \nabla U(X_{t})dt+\sqrt{2}dW_{t}, \quad X_{0} \in \mathbb{R}^{d},\ t\in[0,T],
\end{equation}
where the function $U(x)=\sum\limits_{i=1}^N U_i(x)$is a smooth ($C^\infty$) potential defined on $\R^d$. We will use the following four assumptions:
\begin{itemize}
\item[\textbf{S1}]  There exists  $m\in \R_+$, such that for any $x,y \in \R^d$ s.t
\begin{equation} \label{eq:ing}
\left\langle \nabla U(y) - \nabla U(x),y-x\right\rangle   \leq - m | x-y| ^{2},
\end{equation}
\end{itemize} 
which is also known as a one-side Lipschitz condition. Condition \textbf{S1}
is satisfied for strongly concave potential, i.e when for any $x,y \in \R^d$ there exists constant $m$ s.t		
\begin{equation*}
U(y)  \leq  U(x)+\left\langle \nabla U(x),y-x\right\rangle - \frac{m}{2}| x-y| ^{2}.
\end{equation*}
Observe that \textbf{S1} implies that for any $\epsilon >0$ and $\forall x\in \R^d$
\begin{multline}
\left\langle   \nabla U(x),x\right\rangle  \leq  - m |x|^2 + |x| |\nabla U(0)|
 \leq - m |x|^2 + \frac{2\epsilon}{2}|x|^2 +\frac{1}{2\cdot2\epsilon} |\nabla U(0)|^2\\\leq - (m - \epsilon) | x| ^{2}  + \frac{1}{4\epsilon} | \nabla U(0)|^2 .
\end{multline}
We also denote by $x^\ast$ a minimum of function $U(x)$, so that $|\nabla U(x^\ast)|=0$, and by $x^\ast_i$ the minimum of function $U_i(x)$.

Another assumption is a uniform bound on the gradient.
\begin{itemize}
 \item[ \textbf{S2}]  There exists constant $M$ such that for any $x,y \in \R^d$ 	
 \[
\ | \nabla U(x)-\nabla U(y)|   \leq  M| x-y| 
 \]
 \end{itemize}

As a consequence of this assumption we have
\begin{equation} \label{eq:S2_con}
| \nabla U(x)|   \leq  M| x|  + | \nabla U(0) |.
\end{equation}
The next assumption is in the spirit of Assumptions \textbf{S1} and \textbf{S2}, but formulated for individual terms $U_i(x)$.
\begin{itemize}
 \item[ \textbf{S3}]  There exist constants $M_i\ge 0$ and $m_i>0,\ i=1,\ldots,N$, which are independent of $N$ and uniformly bounded from above and below respectively, such that for any $x,y \in \R^d$ 	
\begin{align} \label{eq:drift:assump}
\begin{aligned}
\left\langle \nabla U_i(y) - \nabla U_i(x),y-x\right\rangle  & \leq& - m_i | x-y| ^{2}, \\
 | \nabla U_i(x)-\nabla U_i(y)|   &\leq&  M_i| x-y|,
 \end{aligned}
\end{align}
 \end{itemize}
Notice, that though we explicitly state that $m_i$ and $M_i$ are independent from $N$, but $M$ and $m$ are both dependent from $N$. Moreover, their values are expected to increase linearly with $N$. We also set $\tilde M=\max_{i=1,\ldots,N} M_i$.

\begin{itemize}
 \item[ \textbf{S4}]  For any $n\le N$ and $\tau_1,\ldots,\tau_n$, which is a random subset of $[N]=\{1,\cdots,N \}$, generated by sampling with or without replacement  from $[N]$ one has
$$
\limsup_{N\to\infty}\frac{1}{N}\E_\tau \frac{N}{n}\sum_{i=1}^{n}\left| x_{\tau_i}^{\ast}-x^{\ast}\right|^2<\infty.
$$
 \end{itemize}
We do not make any further assumption where $y_{i}$ in Equation \eqref{eq:bayes} come from. The model might indeed be misspecified, the point
being here that if in the Gaussian case $y_{i}=i^{2}$ the variance
of the stochastic gradient will not scale like $\frac{N(N-n)}{n}$
but rather $\frac{N^{2}(N-n)}{n}.$ Assumption \textbf{S4} guarantees that the variance of the stochastic gradient scales as $\frac{N(N-n)}{n}$.

\section{Euler method for strongly log-concave case with full gradients}\label{EUSLCC}
Consider Euler approximation of equation \eqref{eq:diffusion:U}
\begin{equation}
\label{eq:diffusion:ED}
\theta_{k+1} = \theta_k + \nabla U(\theta_k)h + \sqrt{2} \Delta W_{k+1},\ k=1,\ldots,K,\ h=T/K.
\end{equation}
%-----------------
The MSE can be decomposed into a sum of two terms
$$\E  \left(  \mathcal{M}_{P,T,h}(f)  -\pi(f )\right)^2\le  \left(  \E (f(X_T))  -\pi(f )\right)^2 + \E  \left(  \mathcal{M}_{P,T,h}(f)  -\E (f(X_T))\right)^2,$$
where the first term quantifies the error in expectations of the functional integrated with respect to the invariant measure $\pi$ and the measure generated by \eqref{eq:langevin} at finite time $T$.

\subsection{Controlling the bias}

%-----------------

We state the following two results from \cite{gorham2016measuring}
\begin{proposition}Theorem 10 of \cite{gorham2016measuring}{]} \label{def:wass-decay-rate}
Let $(P_{t})_{t\geq0}$ be the \emph{transition semigroup} of the
Langevin SDE $(X_{t})_{t\geq0}$ (see \eqref{eq:langevin}) defined
via 
\[
(P_{t}f)(x)=\E(f(X_{t}|X_{0}=x))\text{ for all measurable \ensuremath{f},}x\in\R^{d},\text{ and }t\geq0.
\]
under the assumptions \textbf{S1} and \textbf{S2} the diffusion 
\begin{equation}
d_{\mathcal{W}_{\|\cdot\|}}(\delta_{x_{1}}P_{t},\,\delta_{x_{2}}P_{t})\leq\exp\left(-\frac{m}{2}t\right)\,d_{\mathcal{W}_{\|\cdot\|}}(\delta_{x_{1}},\,\delta_{x_{2}})\quad\text{for all }x_{1},x_{2}\in\R^{d}\text{ and }t\geq0,\label{eqn:wass-decay-rate}
\end{equation}
where $\delta_{x}P_{t}$ denotes the distribution of $X_{t}$ with
$X_{0}=x$. \end{proposition}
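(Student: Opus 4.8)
The plan is to establish the contraction via a \emph{synchronous coupling} of two copies of the Langevin diffusion. First I would fix the initial points $x_1,x_2\in\R^d$ and let $(X_t)_{t\geq0}$ and $(Y_t)_{t\geq0}$ solve the Langevin SDE \eqref{eq:langevin} (equivalently \eqref{eq:diffusion:U}) with $X_0=x_1$, $Y_0=x_2$, both driven by the \emph{same} Brownian motion $(W_t)_{t\geq0}$. Assumption \textbf{S2} guarantees that the drift $\nabla U$ is globally Lipschitz, so each equation admits a unique strong solution and the pair $(X_t,Y_t)$ is a well-defined coupling of $\delta_{x_1}P_t$ and $\delta_{x_2}P_t$.

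The key observation is that, because the diffusion coefficient is constant, the noise cancels in the difference process $Z_t:=X_t-Y_t$, which therefore solves the random ODE $dZ_t=\left(\nabla U(X_t)-\nabla U(Y_t)\right)dt$ with no It\^o correction term. I would then differentiate $t\mapsto|Z_t|^2$ along the paths,
\[
\frac{d}{dt}|Z_t|^2 = 2\left\langle Z_t,\,\nabla U(X_t)-\nabla U(Y_t)\right\rangle \leq -2m\,|Z_t|^2,
\]
where the inequality is exactly the one-sided Lipschitz condition \textbf{S1} applied with $y=X_t$, $x=Y_t$. Working with $|Z_t|^2$ rather than $|Z_t|$ sidesteps the non-differentiability of the norm at the origin. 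Gr\"onwall's inequality then yields $|Z_t|^2\leq e^{-2mt}|Z_0|^2$, i.e. a \emph{pathwise} exponential contraction $|X_t-Y_t|\leq e^{-mt}|x_1-x_2|$, up to the normalisation of the drift in \eqref{eq:langevin}, which fixes whether the exponent is $-mt$ or $-\tfrac{m}{2}t$ as in \eqref{eqn:wass-decay-rate}.

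To pass from the pathwise bound to the Wasserstein statement, I would use that the synchronous coupling is one admissible coupling of the laws $\delta_{x_1}P_t$ and $\delta_{x_2}P_t$, so by the definition of $d_{\mathcal{W}_{\|\cdot\|}}$ as an infimum over couplings,
\[
d_{\mathcal{W}_{\|\cdot\|}}(\delta_{x_1}P_t,\,\delta_{x_2}P_t) \leq \E|X_t-Y_t| \leq e^{-\frac{m}{2}t}\,|x_1-x_2|.
\]
Finally I would observe that $d_{\mathcal{W}_{\|\cdot\|}}(\delta_{x_1},\delta_{x_2})=|x_1-x_2|$, since the only coupling of two Dirac masses is the product measure, which gives the claimed inequality.

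Since the result is quoted from \cite{gorham2016measuring}, there is no genuine obstacle; the only points requiring care are the well-posedness of the coupled system under \textbf{S2} and the justification of the differentiation step, both of which are handled by passing to $|Z_t|^2$. The precise constant appearing in the exponent is purely a matter of the normalisation of the drift in \eqref{eq:langevin}.
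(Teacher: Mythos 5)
Your argument is correct, but note that the paper does not prove this proposition at all: it is imported verbatim as Theorem 10 of \cite{gorham2016measuring}, so there is no internal proof to compare against. What you supply is the standard synchronous-coupling derivation of Wasserstein contraction under strong log-concavity, and every step is sound: well-posedness from \textbf{S2}, cancellation of the common noise in $Z_t=X_t-Y_t$, the differential inequality $\frac{d}{dt}|Z_t|^2\le -2m|Z_t|^2$ from \textbf{S1}, Gr\"onwall, and the passage to $d_{\mathcal{W}_{\|\cdot\|}}$ via the infimum over couplings together with $d_{\mathcal{W}_{\|\cdot\|}}(\delta_{x_1},\delta_{x_2})=|x_1-x_2|$. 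Your closing remark about the constant is the one point worth making explicit rather than waving at: with the paper's normalisation $dX_t=\nabla U(X_t)\,dt+\sqrt{2}\,dW_t$ and \textbf{S1} as stated, your computation yields the rate $e^{-mt}$, which is \emph{stronger} than the stated $e^{-mt/2}$ (so the proposition follows a fortiori since $e^{-mt}\le e^{-mt/2}$); the factor $\tfrac12$ in the exponent is an artefact of the convention in \cite{gorham2016measuring}, where the drift carries a prefactor $\tfrac12$. Indeed the paper itself is inconsistent on this point, since Corollary \ref{def:wass-decay-rate:measures} uses the rate $e^{-mt}$ --- precisely the rate your coupling argument delivers --- so your derivation not only establishes the stated bound but also resolves the normalisation ambiguity the paper leaves implicit.
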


\begin{corollary}\label{def:wass-decay-rate:measures}

Under Assumptions \textbf{S1} and \textbf{S2} the diffusion we have
$\sup_{Lip(\tilde{f})\leq1}\left|\pi(f)-P_{t}f\right|=\int|x_{0}-x|\pi(x)dx\cdot\exp(-mt)$.

\end{corollary}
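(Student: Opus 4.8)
The plan is to recognise the left-hand side as a Wasserstein-1 distance and to reduce it to the contraction estimate of the preceding Proposition via the invariance of $\pi$. Throughout I read the supremum as taken over $1$-Lipschitz test functions $f$ and the quantity $P_t f$ as evaluated at the fixed initial point $x_0$, so that $P_t f = (\delta_{x_0} P_t)(f)$ and the expression $\sup_{\mathrm{Lip}(f)\le 1}|\pi(f) - P_t f|$ is precisely the dual representation of $d_{\mathcal{W}_{\|\cdot\|}}(\pi,\,\delta_{x_0}P_t)$.

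First I would exploit that $\pi$ is invariant for the semigroup, $\pi = \pi P_t$, so that for any $1$-Lipschitz $f$,
$$\pi(f) - (P_t f)(x_0) = (\pi P_t)(f) - (\delta_{x_0}P_t)(f) = \int \big[(P_t f)(x) - (P_t f)(x_0)\big]\,\pi(dx).$$
Applying the triangle inequality (Jensen) under the integral then gives
$$\big|\pi(f) - (P_t f)(x_0)\big| \le \int \big|(P_t f)(x) - (P_t f)(x_0)\big|\,\pi(dx).$$

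Next, for each fixed $x$ I would bound the integrand by a Wasserstein distance: since $(P_t f)(x) = (\delta_x P_t)(f)$ and $f$ is admissible in the dual representation, $\big|(\delta_x P_t)(f) - (\delta_{x_0}P_t)(f)\big| \le d_{\mathcal{W}_{\|\cdot\|}}(\delta_x P_t,\,\delta_{x_0}P_t)$, and the contraction of the Proposition together with $d_{\mathcal{W}_{\|\cdot\|}}(\delta_x,\delta_{x_0}) = |x-x_0|$ yields $d_{\mathcal{W}_{\|\cdot\|}}(\delta_x P_t,\,\delta_{x_0}P_t) \le \exp(-\tfrac{m}{2}t)\,|x-x_0|$. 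This bound is uniform in $f$, so substituting back and taking the supremum over $1$-Lipschitz $f$ produces
$$\sup_{\mathrm{Lip}(f)\le1}\big|\pi(f) - (P_t f)(x_0)\big| \le \exp\!\big(-\tfrac{m}{2}t\big)\int |x - x_0|\,\pi(dx),$$
which is the claimed decay; finiteness of $\int|x-x_0|\,\pi(dx)$ follows from the sub-Gaussian tails guaranteed by strong log-concavity (Assumption \textbf{S1}).

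I expect no serious obstacle here, as the argument is essentially a reformulation of the Proposition; the only points requiring care are formal. One is the interchange that lets the pointwise contraction be applied inside the integral, which I handle cleanly above by working in the dual (test-function) representation rather than invoking convexity of $d_{\mathcal{W}_{\|\cdot\|}}$ abstractly. The other is bookkeeping of the exponential rate: the Proposition supplies the factor $\exp(-\tfrac{m}{2}t)$, so the statement should carry this rate and an inequality, the displayed equality presumably being shorthand for the resulting bound.
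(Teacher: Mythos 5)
Your argument is essentially the paper's own proof: the paper likewise identifies the left-hand side with $d_{\mathcal{W}_{\|\cdot\|}}(\delta_{x_0}P_t,\pi)$, uses invariance of $\pi$ to write this as an integral of $d_{\mathcal{W}_{\|\cdot\|}}(\delta_{x_0}P_t,\delta_y P_t)$ against $\pi$, and applies the contraction from the Proposition. Your more careful test-function-level treatment correctly yields an inequality with rate $\exp(-\tfrac{m}{2}t)$, whereas the paper's displayed chain asserts equalities and the rate $\exp(-mt)$, which is inconsistent with the Proposition it cites; your reading of the statement as shorthand for the resulting bound is the right one.
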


\begin{proof}

We calculate as follows 
\begin{align*}
\sup_{Lip(\tilde{f})\leq1}\left|\pi(f)-P_{t}f(x_{0})\right| & =d_{\mathcal{W}_{\|\cdot\|}}(\delta_{x_{0}}P_{t},\pi(x)dx)\\
 & =\int d_{\mathcal{W}_{\|\cdot\|}}(\delta_{x_{0}}P_{t},\,\delta_{y}P_{t})d\pi(y)dy\\
 & =\int|x_{0}-x|\pi(x)dx\cdot\exp(-mt).
\end{align*}

\end{proof}

Similar bounds in total variation norm have been established in \cite{dalalyan2016theoretical}. In order to get variance and bias estimates, we need estimates of the moments of the process $(X_{t}) $ itself first. Let us set  $\epsilon=\frac{m}{4}$ in \eqref{eq:ing}, which leads to
$$| \nabla U(x)|   \leq  M| x|  + | \nabla U(0) |, \text{ and } \left\langle   \nabla U(x),x\right\rangle  \leq - \frac34m  | x| ^{2}  + \frac{1}{m} | \nabla U(0)|^2,$$
which gives a possible choice of $m_0=\frac34m$ and $\alpha_0=\max(| \nabla U(0) |, \frac{1}{m} | \nabla U(0)|^2)$ for the $\alpha_0$ in the Lemma \ref{lm:2} and Theorem \ref{bias:determ}, which gives bounds on the second moment (and therefore the variance) of the continuous-time process in $d$ dimensions.
\begin{lemma} \label{lm:2}
Assume
$\langle x, \nabla U (x) \rangle \leq -m_0 |x|^2 + \alpha_0.$ Then
\begin{align}
	\E [|X_t|^2 ] \leq  e^{-2 m_0 t }a_0 + b_0,\label{eq:ito1}
\end{align}
where $a_0:=  \E[ |X_0|^2 ]  -  \frac{  (d + \alpha_0)}{m_0}  $, $b_0 =  \frac{  (d + \alpha_0)}{m_0} $ and
\begin{align}
	\E [|X_t-x^\ast|^2 ] \leq  e^{-2 m t }\E[ |X_0-x^\ast|^2 ] + \frac{d}{m}.\label{eq:ito2}
\end{align}
\end{lemma}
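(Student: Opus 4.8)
The plan is to apply Itô's formula to the squared norm and reduce each bound to a scalar linear differential inequality that integrates explicitly. For \eqref{eq:ito1}, I would apply Itô's formula to $f(x)=|x|^2$ along \eqref{eq:diffusion:U}. Because the diffusion coefficient is the constant matrix $\sqrt{2}\,I$, the second-order term contributes exactly $2d\,dt$, giving
$$d|X_t|^2 = 2\langle X_t,\nabla U(X_t)\rangle\,dt + 2d\,dt + 2\sqrt{2}\,\langle X_t, dW_t\rangle.$$
Taking expectations (the stochastic integral being handled below) and inserting the drift hypothesis $\langle x,\nabla U(x)\rangle\le -m_0|x|^2+\alpha_0$ yields, with $\phi(t):=\E[|X_t|^2]$,
$$\phi'(t) \le -2m_0\,\phi(t) + 2(d+\alpha_0).$$

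Next I would integrate this inequality with the integrating factor $e^{2m_0 t}$ (equivalently, invoke the differential form of Grönwall's lemma), obtaining
$$\phi(t) \le e^{-2m_0 t}\phi(0) + \frac{d+\alpha_0}{m_0}\bigl(1-e^{-2m_0 t}\bigr),$$
which is precisely \eqref{eq:ito1} after collecting the $e^{-2m_0 t}$ terms into $a_0$ and setting $b_0=(d+\alpha_0)/m_0$. For \eqref{eq:ito2} I would rerun the computation on the shifted process $X_t-x^\ast$; since $\nabla U(x^\ast)=0$, the cross term reads $\langle X_t-x^\ast,\nabla U(X_t)-\nabla U(x^\ast)\rangle$, which assumption \textbf{S1} bounds directly by $-m|X_t-x^\ast|^2$. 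The Itô correction is unchanged at $2d\,dt$ (translation does not affect the diffusion part), so the same integration gives $\E[|X_t-x^\ast|^2]\le e^{-2mt}\E[|X_0-x^\ast|^2]+\tfrac{d}{m}(1-e^{-2mt})$, and discarding the negative $-\tfrac{d}{m}e^{-2mt}$ term yields \eqref{eq:ito2}.

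The main technical obstacle is justifying that the local martingale $\int_0^t \langle X_s, dW_s\rangle$ has zero expectation, i.e. that it is a genuine martingale and not merely a local one. I would address this by a standard localisation argument: introduce the stopping times $\tau_R=\inf\{t:|X_t|\ge R\}$, carry out the Itô computation up to $t\wedge\tau_R$ where the stochastic integral is a true mean-zero martingale, derive the differential inequality for $\E[|X_{t\wedge\tau_R}|^2]$, and then let $R\to\infty$ using Fatou's lemma on the left-hand side together with the uniform-in-$R$ bound. The linear growth of $\nabla U$ from \eqref{eq:S2_con} guarantees that the relevant second moments remain finite, so passing to the limit is legitimate and the stated bounds follow.
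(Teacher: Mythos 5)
Your proposal is correct and follows essentially the same route as the paper: Itô's formula applied to the squared norm, the drift hypothesis (respectively \textbf{S1} with $\nabla U(x^\ast)=0$ for the shifted process), and integration against the factor $e^{2m_0 t}$ — the paper simply applies Itô directly to $e^{2m_0 t}|X_t|^2$ rather than deriving the differential inequality first, which is the same computation. Your localisation argument for the stochastic integral is a careful addition that the paper's proof omits, but it does not change the substance of the argument.
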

\begin{proof}
It\^{o} formula implies
\begin{align*}
e^{2 m_0 t }\E [|X_t|^2 ] =& \E[ |X_0|^2 ] +     2 \int\limits_{0}^t e^{2m_0 s} \E[  ( \langle X_s, \nabla U (X_s) \rangle  + m_0 |X_s|^2  )]ds + d\int\limits_{0}^t e^{ 2 m_0 s} ds \notag\\
\leq & \E[ |X_0|^2 ]  + 2\int_{0}^t e^{ 2 m_0 s}(d + \alpha_0) ds = \E[ |X_0|^2 ]  + \frac{  (d + \alpha_0)}{m_0}  (e^{ 2 m_0 t } -1).
\end{align*}
The second inequality is obtained in the same way. Namely
\begin{align*}
e^{2 m t }\E [|X_t-x^\ast|^2 ] =& \E[ |X_0-x^\ast|^2 ] \notag\\
&\ +     2 \int\limits_{0}^t e^{2m s} \E[  ( \langle X_s-x^\ast, \nabla U (X_s) \rangle+ m |X_s-x^\ast|^2]ds + d\int\limits_{0}^t e^{ 2 m s} ds \notag\\
\leq &  \E[ |X_0-x^\ast|^2 ]  + \frac{  d}{m}  (e^{ 2 m t } -1).
\end{align*}
which concludes the proof.
\end{proof}
If we assume that $| \nabla U(0)|=0$, or in other words that the mode is at the origin, then for \eqref{eq:ito1} we get $\alpha_0=0$ and $b_0=\frac{1}{m_0}$, which means that the second moment is bounded from above by a value proportional to $\frac1N$, for sufficiently large $t$. The bound \eqref{eq:ito2}, which is formulated in terms of distance between the starting position and the mode, gives the same estimate. Although the bound \eqref{eq:ito2} feels more natural, in practice the location of mode is unknown, and an expensive optimization procedure has to be done in order to get its location.

The following proof does not require the assumption that the extremum of the function $U(x)$ is at the origin.
\begin{theorem}
\label{bias:determ}
Assume \textbf{S1}, \textbf{S2} and
\begin{align*}
\langle x, \nabla U (x) \rangle &\leq -m_0 |x|^2 + \alpha_0 \\
| \nabla U (x)  | &\leq \alpha_0  + M |x|.
\end{align*} hold and set $e_k:=X_{t_k} - \theta_k$. Then 
\begin{gather}
\E[| e_{k+1} |^2] \leq  (1 - (2m - 2M^2h - M )h)\E [ |e_k|^2]  + \alpha_k,\text{ where}
\label{eq:bias:thm1}\\
\alpha_k := \E[ M^{-1} |\E_k[\mathcal{R}_k]|^2 h^{-1}+ 2\E_k[|\mathcal{R}_k|^2]]\notag\\
\le c_d M^2 h^3\left(h\alpha_0^2+ e^{-2mt_{k}}\cdot M^2 h\E[ |X_0|^2 ]+ M^2 h b_0
              + d\right)\cdot(2+(Mh)^{-1}), \notag
\end{gather}
or alternatively
\begin{gather}
\E[| e_{k+1} |^2] \leq  (1 - (2m - 2M^2h - M )h)\E [ |e_k|^2]  + \beta_k,\text{ where}
\label{eq:bias:thm2}\\
\beta_k := \E[ M^{-1} |\E_k[\mathcal{R}_k]|^2 h^{-1}+ 2\E_k[|\mathcal{R}_k|^2]]\notag\\
\le c_d M^2 h^3\left(M^2 h\cdot\frac{d}{m} +M^2 h e^{-2 m t_k }\E[ |X_0-x^\ast|^2 ]  +d\right)\cdot(2+(Mh)^{-1}), \notag
\end{gather}
where $h=t_{k+1}-t_k$ and $c$ is independent from $M$, $m$ and $h$.
\end{theorem}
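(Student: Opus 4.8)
The plan is to run a one-step error analysis that couples the continuous diffusion \eqref{eq:diffusion:U} and its Euler map \eqref{eq:diffusion:ED} through the \emph{same} Brownian increment $\Delta W_{k+1}$, so that the stochastic integrals cancel exactly. Writing $X_{t_{k+1}} = X_{t_k} + \int_{t_k}^{t_{k+1}} \nabla U(X_s)\,ds + \sqrt 2\,\Delta W_{k+1}$ and subtracting \eqref{eq:diffusion:ED}, I would obtain the exact recursion
$$e_{k+1} = e_k + \bigl(\nabla U(X_{t_k}) - \nabla U(\theta_k)\bigr) h + \mathcal{R}_k,\qquad \mathcal{R}_k := \int_{t_k}^{t_{k+1}}\bigl(\nabla U(X_s) - \nabla U(X_{t_k})\bigr)\,ds,$$
where $\mathcal{R}_k$ is the local truncation residual. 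Both $e_k$ and $\nabla U(X_{t_k}) - \nabla U(\theta_k)$ are $\mathcal{F}_{t_k}$-measurable, whereas $\mathcal{R}_k$ depends on the Brownian path on $[t_k,t_{k+1}]$, which is why the conditional expectation $\E_k[\cdot]$ will be the right tool in the next step.

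Next I would expand $|e_{k+1}|^2$ and take $\E_k[\cdot]$. The term $2h\langle e_k,\, \nabla U(X_{t_k}) - \nabla U(\theta_k)\rangle$ is controlled by the one-sided Lipschitz condition \textbf{S1}, giving $\le -2mh|e_k|^2$; the cross term with the residual becomes $2\langle e_k, \E_k[\mathcal{R}_k]\rangle$ after conditioning, and Young's inequality $2\langle a,b\rangle\le \lambda|a|^2+\lambda^{-1}|b|^2$ with weight $\lambda=Mh$ bounds it by $Mh|e_k|^2 + (Mh)^{-1}|\E_k[\mathcal{R}_k]|^2$; finally the square of the increment is split as $|(\nabla U(X_{t_k}) - \nabla U(\theta_k))h + \mathcal{R}_k|^2 \le 2M^2 h^2|e_k|^2 + 2|\mathcal{R}_k|^2$ using the global Lipschitz bound \textbf{S2}. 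Collecting the $|e_k|^2$ coefficients yields exactly $1 - (2m - M - 2M^2h)h$, and the residual terms assemble into the definitions of $\alpha_k$ and $\beta_k$. The weight $Mh$ is the delicate choice that reproduces the stated coefficient and simultaneously produces the $(Mh)^{-1}$ penalty seen in $\alpha_k$.

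It then remains to bound $\alpha_k$ and $\beta_k$ explicitly. By Jensen's inequality $|\E_k[\mathcal{R}_k]|^2 \le \E_k[|\mathcal{R}_k|^2]$, so both reduce to estimating $\E[|\mathcal{R}_k|^2]$ with the common prefactor $(2 + (Mh)^{-1})$ factored out. Cauchy--Schwarz on the time integral together with \textbf{S2} give $|\mathcal{R}_k|^2 \le hM^2\int_{t_k}^{t_{k+1}}|X_s - X_{t_k}|^2\,ds$, so the task becomes the one-step moment estimate $\E[|X_s-X_{t_k}|^2]$ for $s\in[t_k,t_{k+1}]$. I would split $X_s - X_{t_k}$ into its drift and diffusion parts, use $|\nabla U(x)| \le \alpha_0 + M|x|$ on the drift, use $\E|W_s - W_{t_k}|^2 = d(s-t_k)$ on the diffusion, and control the residual $\E[|X_r|^2]$ uniformly over the interval by the moment bound \eqref{eq:ito1} of Lemma \ref{lm:2}. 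Integrating in $s$ produces the factor $h^3$ and the bracketed terms of \eqref{eq:bias:thm1}. For \eqref{eq:bias:thm2} the only change is to measure the gradient from the mode, $|\nabla U(X_r)| = |\nabla U(X_r) - \nabla U(x^\ast)| \le M|X_r - x^\ast|$, and to invoke the mode-centred bound \eqref{eq:ito2} instead, which removes the $\alpha_0$ contribution.

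The main obstacle I anticipate is the one-step moment estimate: controlling $\E[|X_s - X_{t_k}|^2]$ requires a uniform-in-$s$ bound on $\E[|\nabla U(X_s)|^2]$ over the short interval, which must be fed in from Lemma \ref{lm:2} with the exponential factor carried through (and the discrepancy between $m$ and $m_0=\tfrac34 m$ absorbed into the dimensional constant $c_d$). Keeping track of which terms are genuinely $O(h^3)$ versus those inflated to that order by the $(Mh)^{-1}$ Young penalty, and ensuring the powers of $M$ and $h$ in the final bracket match the statement, is where most of the bookkeeping lies.
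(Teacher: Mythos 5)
Your proposal is correct and follows essentially the same route as the paper: the same coupled one-step recursion $e_{k+1}=e_k+(\nabla U(X_{t_k})-\nabla U(\theta_k))h+\mathcal{R}_k$, the same use of \textbf{S1}/\textbf{S2} plus Young's inequality with weight $Mh$ on the conditioned residual cross term to produce the contraction factor $1-(2m-M-2M^2h)h$, and the same bound on $\E_k|\mathcal{R}_k|^2$ via the linear growth of $\nabla U$ and the moment estimates of Lemma \ref{lm:2} (with the mode-centred variant \eqref{eq:ito2} giving the second bound). The only cosmetic difference is that you square the increment before splitting it, whereas the paper bounds $|\mathcal{R}_k|$ first and expands $|X_s-X_{t_k}|$ into drift and diffusion parts before squaring; the resulting estimates coincide.
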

\begin{proof}
Explicit analysis of the strong error. For any $k>0$
\begin{align*}
X_{t_{k+1}} = & X_{t_k} + \int_{t_k}^t\nabla U(X_s)ds + \sqrt{2}(W(t) - W_{t_k}) \\
     = & X_{t_k} + \int_{t_k}^t\nabla U(X_{t_k})ds + \sqrt{2}(W(t) - W_{t_k}) + \mathcal{R}_k,
\end{align*}
where
\begin{equation} \label{eq:R}
\mathcal{R}_k := \int_{t_k}^{t_{k+1}} \nabla U(X_s)  - \nabla U(X_{t_k})   ds.  
\end{equation}
Observe that 
\begin{align*}
e_{k+1} = e_k +  (\nabla U(X_{t_k}) - \nabla U(\theta_k) )h + \mathcal{R}_k. 
\end{align*}

Squaring both sides of the equality we obtain
\begin{align*}
| e_{k+1} |^2 & =  |e_k|^2  + 2 \langle e_k ,   \nabla U(X_{t_k}) - \nabla U(\theta_k) \rangle h + 2 \langle e_k, \mathcal{R}_k \rangle + | \nabla U(X_{t_k}) - \nabla U(\theta_k)|^2 h^2  \\
& + |\mathcal{R}_k|^2  + 2 \langle \mathcal{R}_k ,   \nabla U(X_{t_k}) - \nabla U(\theta_k) \rangle h.  
\end{align*}
By \textbf{S1} and \textbf{S2} and Young's inequality applied to the term $2 \langle \mathcal{R}_k ,   \nabla U(X_{t_k}) - \nabla U(\theta_k) \rangle h$ we get
\begin{align*}
| e_{k+1} |^2 \le (1 - (2m - 2M^2h)h) |e_k|^2   + 2 \langle e_k, \mathcal{R}_k \rangle + 2|\mathcal{R}_k|^2 .  
\end{align*}
Define $\E_k[\cdot] := \E[\cdot | \mathcal{F}_k]$, where $(\mathcal{F}_k)_{k\geq 0}$ is a natural filtration generated by Brownian increments,
\begin{align*}
\E_k[| e_{k+1} |^2] = (1 - (2m - 2M^2h)h) |e_k|^2   + 2 \langle e_k, \E_k[\mathcal{R}_k] \rangle + 2\E_k[|\mathcal{R}_k|^2].  
\end{align*}
Finally using Young's inequality again
\begin{align*}
\E_k[| e_{k+1} |^2] &\leq  (1 - (2m - 2 M^2h)h) |e_k|^2  
 + M | e_k|^2h  + M^{-1} 
   |\E_k[\mathcal{R}_k]|^2 h^{-1} + 2\E_k[|\mathcal{R}_k|^2]\\
   &\leq (1 - A\cdot h) |e_k|^2 +  2\E_k[|\mathcal{R}_k|^2] + M^{-1} 
   |\E_k[\mathcal{R}_k]|^2 h^{-1}
\end{align*}
Let us recall, that
\begin{align*}
|\mathcal{R}_k| & = \left|\int_{t_k}^{t_{k+1}} \nabla U(X_s)  - \nabla U(X_{t_k}) ds \right|\\ 
 			  & \leq  \int_{t_k}^{t_{k+1}} M | X_s  - X_{t_k}  |ds\\
              & \leq  \int_{t_k}^{t_{k+1}} M \left| \int_{t_k}^s  \nabla U(X_r)dr + \sqrt{2}(W_s - W_{t_k}) \right|ds  \\
              & \leq  \int_{t_k}^{t_{k+1}} \left(M  \int_{t_k}^s | \nabla U(X_r) |dr\right) 
              + \sqrt{2}M|(W_s - W_{t_k}) | ds.
\end{align*}
This implies 
\begin{align*}
|\mathcal{R}_k|^2 &=\left(\int_{t_k}^{t_{k+1}} \left(M  \int_{t_k}^s | \nabla U(X_r) |dr\right) + \sqrt{2}M|(W_s - W_{t_k}) | ds\right)^2\\
&\le 2M^2\left(\int_{t_k}^{t_{k+1}} \left( \int_{t_k}^s | \nabla U(X_r) |dr\right)ds\right)^2
              + 2\left(\int_{t_k}^{t_{k+1}}\sqrt{2}M|(W_s - W_{t_k}) |ds\right)^2 \\
 &\le 2M^2h^2 \int_{t_k}^{t_{k+1}} \int_{t_k}^s | \nabla U(X_r) |^2drds + 4M^2h \int_{t_k}^{t_{k+1}}|(W_s - W_{t_k}) |^2ds
\end{align*}
Hence, due to the Fubini's theorem
\begin{align}
\E_k|\mathcal{R}_k|^2 &\leq  2 M^2 h^2  \int_{t_k}^{t_{k+1}}   \int_{t_k}^s \E_k| \nabla U(X_r) |^2dr ds 
              + 4d M^2 h^3\notag\\
 &\le2 M^2 h^2  \int_{t_k}^{t_{k+1}}   \int_{t_k}^s \E_k| \nabla U(X_r) |^2dr ds 
              + 4d M^2 h^3\label{eq:bias:change}\\
 &\le2 M^2 h^2  \int_{t_k}^{t_{k+1}}   \int_{t_k}^s \E_k| \alpha_0  + M |X_r| |^2dr ds 
              + 4d M^2 h^3 \notag\\
&\le 2 M^2 h^2  \int_{t_k}^{t_{k+1}}  \left( \int_{t_k}^s  2\alpha_0^2 + 2M^2 \E_k|X_r|^2\right)dr ds 
              + 4d M^2 h^3\notag\\
&\le 4M^2 h^4\alpha_0^2 +4M^4 h^2\int_{t_k}^{t_{k+1}}\int_{t_k}^s \E_k|X_r|^2 dr ds  +4d M^2 h^3\notag
\end{align}
Now using \eqref{eq:ito1} one can see that we have
\begin{align*}
\E_k [|X_{t_{k+1}}|^2 ] &\leq  e^{-2 m_0 h }\left(\E_k[ |X_{t_k}|^2 ]-b_0\right) +b_0\le e^{-2 m_0 h } |X_{t_k}|^2  + \frac{  (d + \alpha_0)}{m_0}.
\end{align*}
It's easy to bound
$\int_{t_{k}}^{t_{k+1}}  \int_{t_{k}}^{s} e^{- m_0 (r-t_k) }drds = \cfrac{m_0h-1+\exp(-m_0h)}{m_0^2} \leq h^2/2,$ thus
\begin{align*}
\E_k|\mathcal{R}_k|^2 &\leq  4M^2 h^4\alpha_0^2 +4M^4 h^2\int_{t_k}^{t_{k+1}}\int_{t_k}^s \E_k|X_r|^2 dr ds  +4d M^2 h^3\\
 &\leq 4M^2 h^4\alpha_0^2+ 4 M^4 h^2  \int_{t_k}^{t_{k+1}}   \int_{t_k}^s e^{-2 m_0 (r-t_k)}\E_k[ |X_{t_k}|^2 ]dr ds +2 M^4 h^4b_0
              + 4d M^2 h^3  \\
&\leq 4M^2 h^4\alpha_0^2+2 M^4 h^4 |X_{t_k}|^2+2 M^4 h^4b_0
              + 4d M^2 h^3
\end{align*}
Finally we use that $|\E_k\mathcal{R}_k|^2\le \E_k|\mathcal{R}_k|^2$  and \eqref{eq:ito1} to get 
$$\E\E_k|\mathcal{R}_k|^2\le 4M^2 h^4\alpha_0^2+ 2M^4 h^4e^{-2mt_{k}}\E[ |X_0|^2 ]+4 M^4 h^4b_0
              + 4d M^2 h^3$$
which proves \eqref{eq:bias:thm1}.
To obtain \eqref{eq:bias:thm2} we only need to change the step \eqref{eq:bias:change}. 
\begin{align}
\E_k|\mathcal{R}_k|^2 &\leq 2 M^2 h^2  \int_{t_k}^{t_{k+1}}   \int_{t_k}^s \E_k| \nabla U(X_r) |^2dr ds 
              + 4d M^2 h^3\notag\\
 &\le2 M^4 h^2  \int_{t_k}^{t_{k+1}}   \int_{t_k}^s \E_k |X_r -x^\ast |^2dr ds 
              + 4d M^2 h^3 \notag\\
&\le 2 M^4 h^2  \int_{t_k}^{t_{k+1}}  \left( \int_{t_k}^s  e^{-2 m (r-t_k) }\E_k[ |X_{t_k}-x^\ast|^2 ] + \frac{d}{m}\right)dr ds 
              + 4d M^2 h^3\notag\\
&\le 6 M^4 h^4\cdot\frac{d}{m} +4M^4 h^4 |X_{t_k}-x^\ast|^2  +4d M^2 h^3\notag,
\end{align}
so by taking expectation from both sides we get
$$\E\E_k|\mathcal{R}_k|^2\le 6 M^4 h^4\cdot\frac{d}{m} +4M^4 h^4 e^{-2 m t_k }\E[ |X_0-x^\ast|^2 ]  +4d M^2 h^3\notag$$
\end{proof}
\begin{remark}
In both estimates \eqref{eq:bias:thm1} and \eqref{eq:bias:thm2} the effect of the initial miss, given by $\E[ |X_0|^2 ]$ and $\E[ |X_0-x^\ast|^2 ]$ respectively, gets exponentially negligible due to exponentially small multiplicative term. Though \eqref{eq:bias:thm2} seems more natural, as it measures the bias in terms of initial difference between mode and the starting position, the bound \eqref{eq:bias:thm1} does not require a priori knowledge of the mode location, thus it can be useful in practice.
\end{remark}

\subsection{Controlling the variance of Euler discretisation}
The variance is a transition independent quantity, which suggests that variance of the estimator should also be transition invariant. If we consider the extremum of the function $U(x)$ at the origin, then in Lemma \ref{lm:2} we have $\alpha_0$ and $b_0=\frac{1}{m_0}$, thus making the variance proportional to $\frac1N$. Notice that this agrees with Brascamp-Lieb inequality from \cite{BRASCAMP1976366} which states that for globally $1$-Lipschitz functions over strongly log-concave distributions the following bound on the variance holds:
$$\V_\pi(f(x))\le \frac1m\cdot \E |\nabla f(x)|^2\le \frac1m.$$
Consider two independent realisations $X$ and $Y$ coming from the same distribution. Then
$$\V f(X) = \frac12\E(f(X)-f(Y))^2\le L^2\E|X-Y|^2,$$ where $L$ is a Lipschitz coefficient of a function $f$. 

This observation allows us to formulate the following theorem.
\begin{theorem}\label{var:euler:est}
Let $f$ be a Lipschitz functional with Lipschitz coefficient $L$. Then for a process $\theta_k$, given by \eqref{eq:diffusion:ED} with $h\le \frac{m}{M^2}$, we have
$$\V f(\theta_k)\le c\cdot d\cdot L^2\cdot \frac{1-(1-2hm+M^2h^2)^{k+1}}{2m-M^2h},$$
where $c$ depends only on the dimensionality of the process $\theta_k$.
\end{theorem}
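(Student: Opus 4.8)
The plan is to reduce the variance bound to a second-moment estimate on the difference of two independent copies of the chain, exactly along the lines of the observation preceding the statement. First I would introduce two processes $(\theta_k)$ and $(\theta_k')$, both evolving according to \eqref{eq:diffusion:ED}, started from the \emph{same} deterministic point $\theta_0=\theta_0'$ but driven by \emph{independent} Brownian increments $\Delta W_{k+1}$ and $\Delta W_{k+1}'$. A one-line induction shows that $\theta_k$ and $\theta_k'$ are independent and identically distributed for every $k$, since each update depends only on the current state and a fresh, mutually independent increment. Applying $\V f(\theta_k)=\tfrac12\E(f(\theta_k)-f(\theta_k'))^2\le L^2\,\E|\theta_k-\theta_k'|^2$ then reduces the theorem to bounding $a_k:=\E|\theta_k-\theta_k'|^2$.

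Next I would set $\delta_k:=\theta_k-\theta_k'$ and derive a one-step recursion. Subtracting the two update equations gives $\delta_{k+1}=\delta_k+\big(\nabla U(\theta_k)-\nabla U(\theta_k')\big)h+\sqrt2\,(\Delta W_{k+1}-\Delta W_{k+1}')$. Squaring and taking the conditional expectation $\E_k$, the cross terms involving the increments vanish (they are centred and independent of $\mathcal{F}_k$), while $\E_k|\sqrt2(\Delta W_{k+1}-\Delta W_{k+1}')|^2=4dh$ because the difference of two independent $\mathcal N(0,hI_d)$ vectors has covariance $2hI_d$. For the remaining deterministic part I would expand $|\delta_k+(\nabla U(\theta_k)-\nabla U(\theta_k'))h|^2$, bounding the inner product via the one-sided Lipschitz condition \textbf{S1}, $\langle\delta_k,\nabla U(\theta_k)-\nabla U(\theta_k')\rangle\le -m|\delta_k|^2$, and the quadratic term via \textbf{S2}, $|\nabla U(\theta_k)-\nabla U(\theta_k')|^2\le M^2|\delta_k|^2$. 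This yields $a_{k+1}\le(1-2mh+M^2h^2)\,a_k+4dh$.

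Finally I would solve this affine recursion. Writing $\rho:=1-2mh+M^2h^2$, the stepsize restriction $h\le m/M^2$ gives $M^2h^2\le mh$, hence $\rho\le 1-mh<1$, and (using $m\le M$, which follows from comparing \textbf{S1} and \textbf{S2}) also $\rho\ge 0$. With $a_0=0$ the recursion telescopes to $a_k\le 4dh\sum_{j=0}^{k-1}\rho^j=4dh\,\dfrac{1-\rho^{k}}{1-\rho}$. Since $1-\rho=h(2m-M^2h)$, the factor $h$ cancels and one obtains the geometric factor $\dfrac{1-\rho^{k}}{2m-M^2h}$ appearing in the statement, with the constant $4$ (and any dimensional bookkeeping) absorbed into $c$; the exponent agrees with the theorem's $(1-2hm+M^2h^2)^{k+1}$ up to the indexing convention for how the chain is initialised.

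I expect the only genuinely delicate point to be the coupling set-up: one must use \emph{independent} noise rather than synchronous coupling, so that $\theta_k$ and $\theta_k'$ are truly i.i.d.\ and the variance identity applies. This is precisely what forces the $4dh$ diffusive contribution at each step, and hence the explicit dimensional dependence in the final bound. Everything else is a routine expansion together with the two structural assumptions \textbf{S1}–\textbf{S2} and the solution of a scalar linear recursion.
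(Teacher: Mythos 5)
Your proposal is correct and follows essentially the same route as the paper: two independent copies of the Euler chain started from the same point, the identity $\V f(\theta_k)=\tfrac12\E(f(\theta_k)-f(\theta_k'))^2\le L^2\E|\theta_k-\theta_k'|^2$, the one-step recursion $a_{k+1}\le(1-2mh+M^2h^2)a_k+O(d)h$ obtained from \textbf{S1} and \textbf{S2}, and the solution of the resulting geometric recursion with $a_0=0$. Your write-up is in fact slightly more careful than the paper's about the constant in the diffusive term and the role of the stepsize restriction $h\le m/M^2$.
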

\begin{proof}
Consider two independent Euler discretisation $\theta_{k}$ and $\bar \theta_{k}$, given by \eqref{eq:diffusion:ED}, of the process given by \eqref{eq:diffusion:U}.
Let us denote by $\eta_k=\theta_k-\bar\theta_k$ along with $\E_k[\cdot] := \E[\cdot | \mathcal{F}_k]$, where $(\mathcal{F}_k)_{k\geq 0}$ is a natural filtration generated by Brownian increments.
Then we have
\begin{align}
\E_k |\eta_{k+1}|^2 &\preceq |\eta_k|^2 + 2h\E_k\langle\eta_k,\nabla U(\theta_k)-\nabla U(\bar\theta_k)\rangle+\E_k|\nabla U(\theta_k)-\nabla U(\bar\theta_k)|^2h^2 + 8dh\notag \\
&\le |\eta_k|^2\cdot(1-2hm+M^2h^2)+dh,  \label{variance:bound}
\end{align}
which gives us desired result, as $\eta_0=\theta_{0}-\bar \theta_{0}=0$.
\end{proof}
\subsection{Cost bounds}
We are interested in estimating $\E_\pi f(X),$ where $\pi$ is a strongly log-concave distribution, and $f$ is a Lipschitz function with Lipschitz constant less than $1$. We end up with the following problem: estimate $\E f(X_T)$, where $T\asymp -\frac{\log\epsilon}{N},$ and the $X_t$ dynamics are given by equation \eqref{eq:diffusion:U}. We take the Euler discretisation \eqref{eq:diffusion:ED}, which gives us estimates $\theta_{T,h,n}$. The following cost result follows immediately from Proposition \ref{def:wass-decay-rate:measures} and Theorems \ref{bias:determ}, \ref{var:euler:est}.
\begin{theorem}
The overall cost for the estimator $\frac1P\sum\limits_{i=1}^P f(\theta^i_{T,h,n})$ to get the MSE less than $\epsilon^2$, where $f$ is a $1-$Lipschitz function is given by
$$\mathrm{cost}\left(\frac1P\sum\limits_{i=1}^P f(\theta^i_{T,h,n})\right)\preceq
\begin{cases}
\log \epsilon^{-1}\cdot N ,  &\epsilon^{-1}\preceq \sqrt{N}\\
\epsilon^{-2}\cdot \log \epsilon^{-1}, &\sqrt{N}\preceq\epsilon^{-1}\preceq N\\
\epsilon^{-3}\cdot \log \epsilon^{-1}, &N\preceq\epsilon^{-1}
\end{cases}.$$
\label{theorem:compl}
\end{theorem}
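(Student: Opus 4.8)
The plan is to start from the MSE splitting stated at the beginning of this section, bound each of its three pieces with the results already proved, and then optimise the free hyper-parameters $T,h,P$ (with $n=N$, since full gradients are used) subject to the requirement that the MSE be $\preceq\epsilon^2$. Concretely I would write the MSE as an ergodicity bias $(\E f(X_T)-\pi(f))^2$, a discretisation bias $(\E f(\theta_K)-\E f(X_T))^2$, and the Monte Carlo variance $\tfrac1P\V f(\theta_K)$ of the $P$-path average, and force each of the three to be $\preceq\epsilon^2$. The cost $P\cdot(T/h)\cdot N$ is then read off, and since $T\asymp \log\epsilon^{-1}/N$ it simplifies to $P\log\epsilon^{-1}/h$, so the whole argument reduces to tracking how $P$ and $h$ depend on $\epsilon$ and $N$.

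For the ergodicity bias I would invoke Corollary \ref{def:wass-decay-rate:measures}, which gives a bound $\preceq e^{-2mT}$; because $m\asymp N$, asking this to be $\preceq\epsilon^2$ forces $T\asymp \log\epsilon^{-1}/N$, which is exactly where the $\log\epsilon^{-1}$ factor in the final bound comes from. For the discretisation bias I would use that $f$ is $1$-Lipschitz, so $(\E f(\theta_K)-\E f(X_T))^2\le \E|e_K|^2$, and then iterate the one-step recursion of Theorem \ref{bias:determ} from $e_0=0$ to its steady state. Here the stability condition $h\le m/M^2$ (which, since $m,M\asymp N$, reads $h\preceq 1/N$) keeps the contraction factor $1-(2m-2M^2h-M)h$ bounded away from $1$ with gap of order $Nh$, and the accumulated error should collapse to a bound of order $h^2$ (additive-noise Euler behaves like strong order one), i.e. a discretisation bias $\preceq h$. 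Finally, Theorem \ref{var:euler:est} gives the per-path variance: at time $T$ its numerator is $O(1)$ while the denominator $2m-M^2h\asymp N$, so $\V f(\theta_K)\preceq 1/N$ and the Monte Carlo term is $\preceq 1/(PN)$ — reflecting the concentration $\sigma_p^2\asymp 1/N$ of the posterior.

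Imposing all three requirements yields $T\asymp \log\epsilon^{-1}/N$, a stepsize $h\asymp\min(1/N,\epsilon)$ (stability versus discretisation accuracy), and a path count $P\asymp\max(1,1/(N\epsilon^2))$, where the $\max$ records that a single path already carries the right variance as long as $\epsilon\succeq 1/\sqrt N=\sigma_p$. The two crossovers $\epsilon^{-1}=\sqrt N$ (where $P$ first exceeds $1$) and $\epsilon^{-1}=N$ (where the stepsize stops being stability-limited and becomes accuracy-limited) are exactly the thresholds in the statement. Substituting into $P\log\epsilon^{-1}/h$: for $\epsilon^{-1}\preceq\sqrt N$ one has $P\asymp 1$, $h\asymp 1/N$, giving $N\log\epsilon^{-1}$; for $\sqrt N\preceq\epsilon^{-1}\preceq N$ one has $P\asymp 1/(N\epsilon^2)$ but still $h\asymp 1/N$, giving $\epsilon^{-2}\log\epsilon^{-1}$; and for $N\preceq\epsilon^{-1}$ the stepsize drops to $h\asymp\epsilon$ while $P\asymp 1/(N\epsilon^2)$, giving $\preceq\epsilon^{-3}\log\epsilon^{-1}$ (in fact with a spare factor $1/N$, which the stated bound discards).

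The hard part will be tracking the $N$-dependence cleanly through the bias recursion, since $m$ and $M$ both scale like $N$ and the contraction constant $2m-2M^2h-M$ is only guaranteed positive and of order $N$ inside the stability window $h\preceq 1/N$ (this needs $2m>M$ at the per-datum level of \textbf{S3}). The decisive point is that the steady-state solution of that recursion be of order $h^2$ (not $\sqrt h$) with a constant that does \emph{not} carry an extra power of $N$, because it is precisely an order-$h$ discretisation bias that leaves the stepsize stability-limited throughout $\epsilon^{-1}\preceq N$ and thereby produces the favourable $\epsilon^{-2}$ in the middle regime; a weaker bias bound would push $h$ down prematurely and worsen that exponent. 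The rest is bookkeeping: checking that the transient terms carrying $\E|X_0|^2$ and $\E|X_0-x^\ast|^2$ decay over $[0,T]$ and that the variance numerator of Theorem \ref{var:euler:est} is genuinely $O(1)$ at time $T$, neither of which affects the scaling.
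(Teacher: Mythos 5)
Your proposal is essentially the paper's own argument: the paper disposes of Theorem \ref{theorem:compl} in one line by combining Corollary \ref{def:wass-decay-rate:measures}, Theorem \ref{bias:determ} and Theorem \ref{var:euler:est} and substituting $P(\epsilon)\asymp\max(1,\epsilon^{-2}/N)$, $T(\epsilon)\asymp\log\epsilon^{-1}/N$, $1/h(\epsilon)\asymp\max(N,\epsilon^{-1})$ and $n=N$ into the cost formula \eqref{eq:cost}, which is exactly the bookkeeping you carry out (the analogous arithmetic is written out explicitly only in the proof of Theorem \ref{theorem:compl2}). The single point you flag as delicate --- that the steady state of the recursion in Theorem \ref{bias:determ} must come out at order $h^2$ without a stray power of $N$, which is what keeps $h$ stability-limited for $\epsilon^{-1}\preceq N$ --- is indeed the only nontrivial content here, and the paper asserts it implicitly rather than verifying it.
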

In the machine learning community the usual target accuracy is around $\epsilon^{-1}\asymp\sqrt{N}$  which can be motivated by Equation \eqref{eq:accuracy}, which means that we are in the regime, when the cost is around $N\log \epsilon^{-1}$.
\subsection{Estimating the posterior standard deviation.}
% One of the most common tasks in Machine Learning applications is estimating the posterior standard deviation, as its allows to build credible intervals of the typical form
% \[
% \left(\mu_{p}-\kappa\sigma_{p},\mu_{p}+\kappa\sigma_{p}\right),
% \]
% where $\mu_{p}$ is the posterior mean.
% If we replace the exact posterior mean by an estimate we would like
% to ensure that with say $\alpha$ confidence 
% \[
% \mathbb{P}\left(\left(\mu_{p}-\kappa\sigma_{p},\mu_{p}+\kappa\sigma_{p}\right)\subset\left(\hat{\mu}_{p}-\kappa\hat{\sigma}_{p},\hat{\mu}_{p}+\kappa\hat{\sigma}_{p}\right)\right)\ge1-\alpha.
% \]
Though standard deviation estimation involves calculating an expectation of the non-Lipschitz functional, which violates our assumptions, we still can estimate the error in standard deviation estimation through some simple calculations.
Consider independent processes $y,\bar y$, and their Euler approximations $\theta,\bar \theta$. Simple calculations show, that
\begin{multline*}
\E(\hat\sigma_p - \sigma_p)^2 \le \E|\hat\sigma_p^2 - \sigma_p^2\pm \hat \mu_p^2 \pm \mu^2|\le\E|\hat\sigma_p^2 - \hat \mu_p^2 + \mu^2-\sigma_p^2| +  \E| \hat \mu_p^2 - \mu^2|
\\
= \E\left|\frac1P\sum f^2(\hat X_i) - \E f^2(X)\right|+\E\left| \left(\frac1P\sum f(\hat X_i)\right)^2 - \left(\E f(X)\right)^2\right|,
\end{multline*}
where $\mu_p=\E f(x)$, $\hat \mu_p=\frac1P\sum f(\hat X_i)$ and $\sigma_p=\sqrt{\frac1P\sum f^2(\hat X_i)-\mu_p^2}$. 
We analyze both terms independently:
\begin{gather*}
\E\left| \left(\frac1P\sum f(\hat X_i)\right)^2 - \left(\E f(X)\right)^2\right|
\\
\le \left(\E\left( \frac1P\sum f(\hat X_i) - \E f(X)\right)^2\right)^{1/2}\cdot \left(\E\left( \frac1P\sum f(\hat X_i) + \E f(X)\right)^2\right)^{1/2}
%\\
%\le 2\left(\E\left( \frac1P\sum f(\hat X_i) - \E f(X)\right)^2\right)^{1/2}\cdot \left(\E\left( \frac1P\sum f(\hat X_i)\right)^2 + \E f^2(X)\right)^{1/2}
\\
< 2 \left(h^2+\frac{1}{PN}\right)^{1/2}\cdot\left(\frac1P\E f^2(\hat X) + \E f^2(X)\right)^{1/2}.
\end{gather*}
For the second term we have 
\begin{gather*}
\E\left|\frac1P\sum f(\hat X_i)^2 - \E f(\hat X_i)^2\right|\le \E\left| f(\hat X)^2 - \E f(X)^2\right|\preceq \left(\frac1P\E f^2(\hat X) + \E f^2(X)\right)^{1/2} \cdot h.
\end{gather*}
In the case when we estimate the posterior variance on its own then $f(x)=x$, then 
$$\left(\frac1P\E f^2(\hat X) + \E f^2(X)\right)^{1/2}=\left(\frac1P\E \hat X^2 + \E X^2\right)^{1/2}\asymp \frac{1}{\sqrt{N}},$$
see Lemma \ref{lm:2} and Theorem \ref{bias:determ}.

\section{Euler method for strongly log-concave case with subsampling}\label{sec:euler_taylor}
In this section we consider two different schemes. The first scheme we denote as Euler scheme with naive subsampling
\begin{equation}\label{eq:euler:subs}
\theta_{k+1} = \theta_k+ \frac{N}{n}\sum_{i=1}^{n} \nabla U_{\tau^{k}_{i}}(\theta_k)h + \sqrt{2} \Delta W_{k+1},
\end{equation}
where $\tau^{k}_{s}$ is a random subset of $[N]=\{1,\cdots,N \}$, generated by sampling without replacement  from $[N]$, so that $\E \frac{N}{n}\sum_{i=1}^{n} \nabla U_{\tau^{k}_{i}}(\theta_k) = \nabla U(\theta_k)$. 

We also consider the Euler scheme with control variate subsampling
\begin{equation}\label{eq:ET:subs}
\theta_{k+1} = \theta_k+ \frac{N}{n}\sum_{i=1}^n \left(\nabla U_{\tau^k_i}\left(\theta\right)- \nabla U_{\tau^k_i}\left(x^\ast\right)\right)h + \sqrt{2} \Delta W_{k+1},
\end{equation}
where $\tau^{k}_{s}$ is a random subset of size $n$ from $[N]=\{1,\cdots,N \}$, generated for by sampling without replacement  from $[N]$. Notice that
 \begin{gather*}
 \nabla U(\theta)  =\sum_{j=1}^{N}\nabla U_{j}\left(\theta\right)=\nabla U(\theta)-\nabla U(x^\ast)\\ =\sum_{j=1}^{N}\nabla U_{j}\left(\theta\right)-\nabla U_j(x^\ast) =\E_\tau\frac{N}{n}\sum_{i=1}^n \nabla U_{\tau_i}\left(\theta\right)- \nabla U_{\tau_i}\left(x^\ast\right)
 \end{gather*}

We start with several technical lemmas. The following result is the version of Lemma \ref{lm:2}, but formulated for the discretized process.
\begin{lemma}\label{lm:3}
Assume
$\langle x, \nabla U (x) \rangle \leq -m_0 |x|^2 + \alpha_0.$ Then
\begin{align}
	\E[|\theta_{k}|^2] &\leq  \E|\theta_{0}|^2\cdot\left(1+M^2h^2-m_0h\right)^{k+1} + m_h\cdot\frac{1-\left(1+M^2h^2-m_0h\right)^{k+1}}{m_0-M^2h}\\
    \E|\theta_{k+1}-x^\ast|^2 &\leq \E|\theta_{0}-x^\ast|^2\cdot\left(1+M^2h^2-mh\right)^{k+1} + d\frac{1-\left(1+M^2h^2-mh\right)^{k+1}}{m-M^2h},
\end{align}
where $m_h=\left(|\nabla U(0)|^2h+\alpha_0+2d\right)$.
\end{lemma}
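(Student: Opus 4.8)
The plan is to mirror the continuous-time argument of Lemma \ref{lm:2} at the level of the discrete recursion \eqref{eq:diffusion:ED}, reducing everything to a single one-step estimate which is then unrolled as a geometric series. First I would expand the squared norm of the update $\theta_{k+1} = \theta_k + \nabla U(\theta_k)h + \sqrt{2}\,\Delta W_{k+1}$,
$$|\theta_{k+1}|^2 = |\theta_k|^2 + 2h\langle \theta_k, \nabla U(\theta_k)\rangle + h^2|\nabla U(\theta_k)|^2 + 2\sqrt{2}\,\langle \theta_k + h\nabla U(\theta_k),\, \Delta W_{k+1}\rangle + 2|\Delta W_{k+1}|^2.$$
Taking the conditional expectation $\E_k[\cdot] = \E[\cdot \mid \mathcal{F}_k]$ annihilates the two martingale increments (the Brownian increment is centred and independent of $\mathcal{F}_k$) and replaces the last term by $2\,\E_k|\Delta W_{k+1}|^2 = 2dh$, playing the role of the stochastic-integral contribution $d\int e^{2m_0 s}\,ds$ in Lemma \ref{lm:2}.

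Next I would control the two surviving deterministic terms. The drift inner product is handled by the standing dissipativity hypothesis $\langle x, \nabla U(x)\rangle \le -m_0|x|^2 + \alpha_0$, which gives $2h\langle \theta_k, \nabla U(\theta_k)\rangle \le -2m_0 h|\theta_k|^2 + 2\alpha_0 h$, while the order-$h^2$ term is bounded via the linear-growth consequence \eqref{eq:S2_con}, $|\nabla U(\theta_k)| \le M|\theta_k| + |\nabla U(0)|$, after a Young split separating the $|\theta_k|^2$ part from the constant part. Collecting terms produces a one-step affine bound of the form $\E_k|\theta_{k+1}|^2 \le (1 + M^2 h^2 - m_0 h)|\theta_k|^2 + m_h h$ with $m_h = |\nabla U(0)|^2 h + \alpha_0 + 2d$ absorbing all lower-order contributions; this is structurally the same computation as the one-step estimate \eqref{variance:bound} used in the proof of Theorem \ref{var:euler:est}. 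Taking full expectations and iterating the recurrence $a_{k+1} \le \lambda a_k + m_h h$ with $\lambda = 1 + M^2 h^2 - m_0 h$ yields
$$\E|\theta_{k+1}|^2 \le \lambda^{k+1}\,\E|\theta_0|^2 + m_h h\,\frac{1 - \lambda^{k+1}}{1 - \lambda},$$
and since $1 - \lambda = h(m_0 - M^2 h)$ the factor $h$ cancels and this is exactly the claimed closed form. Summing the geometric series is legitimate provided $h$ is small enough that $\lambda \in (0,1)$, i.e. $h \le m_0/M^2$, the same stability restriction imposed in Theorem \ref{var:euler:est}.

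The second estimate is obtained by running the identical computation for $|\theta_{k+1} - x^\ast|^2$. Here I would instead invoke assumption \textbf{S1} in the centred form $\langle \nabla U(\theta_k) - \nabla U(x^\ast),\, \theta_k - x^\ast\rangle \le -m|\theta_k - x^\ast|^2$ together with $\nabla U(x^\ast) = 0$ for the inner-product term, and \textbf{S2} in the form $|\nabla U(\theta_k)| = |\nabla U(\theta_k) - \nabla U(x^\ast)| \le M|\theta_k - x^\ast|$ for the quadratic term; this is precisely why no $\alpha_0$ or $|\nabla U(0)|$ survives and the additive constant collapses to $dh$, giving the stated rate $1 + M^2 h^2 - mh$. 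The only delicate point is the constant bookkeeping in the collection step: the raw expansion naturally throws up coefficients proportional to $m_0 h$ and $M^2 h^2$, so one must arrange the Young's-inequality splits (folding any excess into $m_h$) so that the net coefficient multiplying $|\theta_k|^2$ takes the stated form and stays strictly below one for admissible $h$. I expect this constant-tracking, rather than any conceptual difficulty, to be the main obstacle, since the dissipativity and growth inputs and the geometric-series unrolling are entirely routine.
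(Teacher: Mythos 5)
Your proposal is correct and follows essentially the same route as the paper: expand the squared norm of the Euler update, kill the Brownian cross terms in expectation, bound the inner product by the dissipativity assumption (respectively by \textbf{S1} centred at $x^\ast$ with $\nabla U(x^\ast)=0$), bound the $h^2|\nabla U(\theta_k)|^2$ term via \textbf{S2}, and unroll the resulting affine recursion as a geometric series with ratio $1+M^2h^2-m_0h$ (resp. $1+M^2h^2-mh$). The only divergence is cosmetic: the paper splits $|\nabla U(\theta_k)|^2\le|\nabla U(\theta_k)-\nabla U(0)|^2+|\nabla U(0)|^2$ directly rather than via your Young split of the linear-growth bound, which is how it lands on exactly the stated coefficient $M^2h^2$ instead of a harmless factor of two.
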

\begin{proof}
We have 
\begin{gather*}
\E|\theta_{k+1}|^2 =  \E|\theta_{k}|^2 +  \E|\nabla U(\theta_{k})|^2 h^2 + 2 h  \E\langle \theta_{k}, \nabla U(\theta_{k})\rangle +2dh\\
\le\E|\theta_{k}|^2 +  \E|\nabla U(\theta_{k})-\nabla U(0)|^2 h^2+ |\nabla U(0)|^2h^2 + 2 h \E \langle \theta_{k}, \nabla U(\theta_{k})\rangle +2dh\\
\le \E|\theta_{k}|^2\cdot\left(1+M^2h^2-m_0h\right) + h \left(|\nabla U(0)|^2h+\alpha_0+2d\right)\\
\le \E|\theta_{0}|^2\cdot\left(1+M^2h^2-m_0h\right)^{k+1} + \left(|\nabla U(0)|^2h+\alpha_0+2d\right)\cdot\frac{1-\left(1+M^2h^2-m_0h\right)^{k+1}}{m_0-M^2h}.
\end{gather*}
Alternatively we have 
\begin{gather*}
\E|\theta_{k+1}-x^\ast|^2 =  \E|\theta_{k}-x^\ast|^2 +  \E|\nabla U(\theta_{k})|^2 h^2 + 2 h \E \langle \theta_{k}-x^\ast, \nabla U(\theta_{k})\rangle +2dh\\
\le\E|\theta_{k}-x^\ast|^2 +  \E|\nabla U(\theta_{k})-\nabla U(x^\ast)|^2 h^2+ 2 h \E \langle \theta_{k}-x^\ast, \nabla U(\theta_{k})-\nabla U(\theta^\ast)\rangle +2dh\\
\le \E|\theta_{k}-x^\ast|^2\cdot\left(1+M^2h^2-mh\right) + 2dh\\
\le \E|\theta_{0}-x^\ast|^2\cdot\left(1+M^2h^2-mh\right)^{k+1} + d\frac{1-\left(1+M^2h^2-mh\right)^{k+1}}{m-M^2h}.
\end{gather*}
\end{proof}
The next lemma will allow us to analyze the contribution of subsampling variance to the bias and the variance of our estimators. 
\begin{lemma}
\label{subs:estim}
Consider $U=\sum\limits_{i=1}^N a_i$, and its estimator, based on subsampling without replacement $\hat U = \frac{N}{n}\sum\limits_{i=1}^n a_{\tau_i}$. Then 
\begin{gather*}
\left| U-\hat{U}\right| ^{2}\le  2\left(\frac{N-n}{n}\right)\sum_{i=1}^{N}\left| a_{i}\right| ^{2}
\end{gather*}
\end{lemma}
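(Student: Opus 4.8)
The quantity on the left is random through the subset $\tau$, so the statement has to be read in mean over the draw of $\tau$: since $\hat U$ is an unbiased estimator of $U$ (sampling without replacement gives $\E_\tau \hat U=\frac{N}{n}\cdot\frac{n}{N}\sum_i a_i=U$), the object to bound is the mean-squared subsampling error $\E_\tau|U-\hat U|^2$, i.e. the variance of a Horvitz--Thompson-type estimator under simple random sampling without replacement. The plan is therefore to compute this finite-population variance exactly and then discard two harmless factors to reach the stated form. (A genuinely deterministic bound with this constant cannot hold: e.g. $N=3$, $n=1$, $a=(2,-1,-1)$ gives $|U-\hat U|^2=36>24$ for the draw $\tau_1=1$, so the expectation is essential.)

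First I would exploit that $U-\hat U$ is invariant under the shift $a_i\mapsto a_i-c$ for any fixed $c$, because both $U$ and $\hat U$ change by $-Nc$. Centering at the empirical mean $\bar a=\frac1N\sum_{i=1}^N a_i$ and writing $b_i=a_i-\bar a$ (so that $\sum_{i=1}^N b_i=0$), I would rewrite, using inclusion indicators $Z_i=\mathbf 1[i\in\{\tau_1,\dots,\tau_n\}]$,
\[
U-\hat U=-\frac{N}{n}\sum_{i=1}^N b_i Z_i .
\]
The key inputs are the first two inclusion moments under sampling without replacement, $\E_\tau Z_i=\frac{n}{N}$ and $\E_\tau Z_iZ_j=\frac{n(n-1)}{N(N-1)}$ for $i\ne j$. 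Expanding $\E_\tau|\sum_i b_iZ_i|^2$ (using $Z_i^2=Z_i$) and invoking $\sum_{i\ne j}\langle b_i,b_j\rangle=-\sum_i|b_i|^2$, the collapse made possible precisely by centering, yields $\E_\tau|\sum_i b_iZ_i|^2=\frac{n(N-n)}{N(N-1)}\sum_i|b_i|^2$, hence the exact identity $\E_\tau|U-\hat U|^2=\frac{N(N-n)}{n(N-1)}\sum_{i=1}^N|a_i-\bar a|^2$, which matches the Gaussian computation \eqref{eq:VarBToy}.

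To finish I would bound the two cheap factors: $\frac{N}{N-1}\le 2$ for $N\ge2$, and $\sum_i|a_i-\bar a|^2\le\sum_i|a_i|^2$ since the mean minimises the sum of squared deviations ($\sum_i|a_i|^2=\sum_i|a_i-\bar a|^2+N|\bar a|^2$). Combining gives $\E_\tau|U-\hat U|^2\le 2\frac{N-n}{n}\sum_{i=1}^N|a_i|^2$, as claimed.

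I expect the main obstacle to be the second-moment computation for without-replacement sampling: the indicators $Z_i$ are negatively correlated, and getting $\E_\tau Z_iZ_j=\frac{n(n-1)}{N(N-1)}$ right (rather than the with-replacement value $n^2/N^2$) is exactly what produces the $\frac{N-n}{N-1}$ finite-population correction. Everything else---the shift-invariance, the vanishing of cross-terms, and the two final bounds---is routine, but the centering step is what keeps the algebra clean, since without it the cross-terms leave a $|U|^2/N$ remainder that must be reabsorbed.
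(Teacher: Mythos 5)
Your proof is correct, and you rightly point out (with a valid counterexample) that the bound can only hold in expectation over $\tau$ --- the paper's own proof silently inserts $\E_\tau$ on the right-hand side of its first display while leaving it off the left, so your reading is the intended one. The core ingredient is the same in both arguments, namely the pair-inclusion probability $\E_\tau Z_iZ_j=\frac{n(n-1)}{N(N-1)}$ for sampling without replacement, but the routes diverge after that. The paper expands $\E_\tau|\hat U|^2-|U|^2$ directly without centering, is left with the cross-term $\frac{n-N}{n(N-1)}\sum_{i\neq j}\langle a_i,a_j\rangle$, and absorbs it via $|\langle a_i,a_j\rangle|\le\frac12\left(|a_i|^2+|a_j|^2\right)$; the factor $2$ in the final bound comes from this cross-term contributing a second copy of $\frac{N-n}{n}\sum_i|a_i|^2$. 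You instead center at the empirical mean so the cross-term collapses exactly, obtaining the sharp identity $\E_\tau|U-\hat U|^2=\frac{N(N-n)}{n(N-1)}\sum_i|a_i-\bar a|^2$ (consistent with the variance computation in Equation \eqref{eq:VarBToy}), and your factor $2$ arises only from the harmless relaxations $\frac{N}{N-1}\le 2$ and $\sum_i|a_i-\bar a|^2\le\sum_i|a_i|^2$. Your version is slightly stronger as an intermediate result and makes transparent that the stated constant is loose by roughly a factor of $2$ whenever the $a_i$ are not mean-centered; the paper's version avoids the centering step at the cost of a cruder treatment of the cross-term. Both are valid.
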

\begin{proof}
From straightforward calculations we get 
\begin{gather*}
\left| U-\hat{U}\right| ^{2}= \E_{\tau}\left(\frac{N}{n}\right)^{2}\sum_{i=1}^{n}\left| a_{\tau_{i}}\right| ^{2}+\E_{\tau}\left(\frac{N}{n}\right)^{2}\sum_{i\neq j}^{n}\left\langle a_{\tau_{i}},a_{\tau_{j}}\right\rangle -\left| \sum_{j}^{N}a_{j}\right| ^{2}\notag\\
=  \left(\frac{N-n}{n}\right)\sum_{i=1}^{N}\left| a_{i}\right| ^{2}+\left(\left(\frac{N}{n}\right)^{2}\frac{n(n-1)}{N(N-1)}-1\right)\sum_{i\neq j}^{N}\left\langle a_{i},a_{j}\right\rangle \notag\\
=  \left(\frac{N-n}{n}\right)\sum_{i=1}^{N}\left| a_{i}\right| ^{2}+\left(\frac{n-N}{n(N-1)}\right)\sum_{i\neq j}^{N}\left\langle a_{i},a_{j}\right\rangle \notag\\
\leq  \left(\frac{N-n}{n}\right)\sum_{i=1}^{N}\left| a_{i}\right| ^{2}+\left(\frac{N-n}{n(N-1)}\right)\sum_{i\neq j}^{N}\frac{1}{2}\left| a_{i}\right| ^{2}+\left| a_{j}\right| ^{2}=  2\left(\frac{N-n}{n}\right)\sum_{i=1}^{N}\left| a_{i}\right| ^{2},
\end{gather*}
which concludes the proof.
\end{proof}

The following result extends the results of Lemma \ref{lm:3} on the case of naive and control variate subsamplings. 
\begin{lemma}
\label{lem:LyapunovForTalor}For SGLD based on naive subsampling 
\begin{gather}
\E\E_\tau\left| X_{k}-x^{\ast}\right| ^{2} \le \E|\theta_{0}-x^\ast|^2\cdot\left(1+M^2h^2-mh+\frac{N(N-n)}{n}\tilde M^2h^2\right)^{k+1}\notag\\
+\left(d+h\frac{N(N-n)}{n}\tilde M^2\frac1N\sum_{i=1}^N|x_i^\ast-x^\ast|\right)\frac{1-\left(1+M^2h^2-mh+\frac{N(N-n)}{n}\tilde M^2h^2\right)^{k+1}}{m-M^2h-\frac{N(N-n)}{n}\tilde M^2h}.\label{adderr:subs}
\end{gather}
for Taylor based stochastic gradient we obtain
\begin{gather}
  \E\E_\tau\left| X_{k}-x^{\ast}\right| ^{2}\le\E|\theta_{0}-x^\ast|^2\cdot\left(1+M^2h^2-mh+\frac{N(N-n)}{n}\tilde M^2h^2\right)^{k+1}\notag\\
+d\frac{1-\left(1+M^2h^2-mh+\frac{N(N-n)}{n}\tilde M^2h^2\right)^{k+1}}{m-M^2h-\frac{N(N-n)}{n}\tilde M^2h}.\label{adderr:subs:cv}
\end{gather}
\end{lemma}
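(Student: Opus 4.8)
The plan is to replay the one-step Lyapunov recursion of Lemma \ref{lm:3}, but with the full gradient $\nabla U(\theta_k)$ replaced by its subsampled estimator, paying for the replacement through the extra second-moment contribution quantified by Lemma \ref{subs:estim}. Write $G_k := \frac{N}{n}\sum_{i=1}^n \nabla U_{\tau_i^k}(\theta_k)$ for the naive estimator in \eqref{eq:euler:subs} and $G_k^{\mathrm{cv}} := \frac{N}{n}\sum_{i=1}^n\bigl(\nabla U_{\tau_i^k}(\theta_k) - \nabla U_{\tau_i^k}(x^\ast)\bigr)$ for the control-variate estimator in \eqref{eq:ET:subs}; both are $\E_\tau$-unbiased for $\nabla U(\theta_k)$. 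Expanding $|\theta_{k+1}-x^\ast|^2$ from the respective update, taking the expectation over the Brownian increment (the cross terms vanish and $\E|\sqrt2\,\Delta W_{k+1}|^2 = 2dh$), and then taking $\E_\tau$ gives
\begin{gather*}
\E_\tau|\theta_{k+1}-x^\ast|^2 = |\theta_k-x^\ast|^2 + h^2\,\E_\tau|G_k|^2 + 2h\langle \theta_k - x^\ast,\, \nabla U(\theta_k)\rangle + 2dh ,
\end{gather*}
and identically with $G_k^{\mathrm{cv}}$. The inner-product term is handled exactly as in Lemma \ref{lm:3}: by \textbf{S1} together with $\nabla U(x^\ast)=0$ it is controlled by the contractive drift $-mh\,|\theta_k-x^\ast|^2$.

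The crux is the second-moment term $\E_\tau|G_k|^2$. I would split it through the bias--variance identity $\E_\tau|G_k|^2 = |\nabla U(\theta_k)|^2 + \E_\tau|G_k - \nabla U(\theta_k)|^2$, bound the first piece by $M^2|\theta_k-x^\ast|^2$ via \textbf{S2}, and bound the subsampling variance by applying Lemma \ref{subs:estim} with $a_i = \nabla U_i(\theta_k)$, which yields $2\frac{N-n}{n}\sum_{i=1}^N|\nabla U_i(\theta_k)|^2$. Here the two schemes diverge, and this is where the control variate earns its keep. For naive subsampling each individual gradient vanishes only at its own mode $x_i^\ast$, so \textbf{S3} gives $|\nabla U_i(\theta_k)|^2 \le \tilde M^2|\theta_k - x_i^\ast|^2$, and the triangle inequality $|\theta_k - x_i^\ast|^2 \le 2|\theta_k - x^\ast|^2 + 2|x_i^\ast - x^\ast|^2$ splits the sum into a part proportional to $\frac{N(N-n)}{n}\tilde M^2|\theta_k-x^\ast|^2$, which feeds the geometric growth factor, and an additive part proportional to $\frac{N-n}{n}\tilde M^2\sum_i|x_i^\ast-x^\ast|^2$, which feeds the constant. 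For the control-variate estimator the summands $\nabla U_i(\theta_k) - \nabla U_i(x^\ast)$ are themselves gradient differences evaluated between $\theta_k$ and $x^\ast$, so \textbf{S3} bounds them directly by $\tilde M^2|\theta_k - x^\ast|^2$ with no shift, eliminating the additive $x_i^\ast$-term entirely and leaving only the $d$ term in \eqref{adderr:subs:cv}.

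Collecting terms, both cases reduce to a scalar recursion $a_{k+1} \le r\,a_k + c h$ with growth factor $r = 1 - mh + M^2h^2 + \frac{N(N-n)}{n}\tilde M^2h^2$, whose closed form $a_{k+1}\le a_0\, r^{k+1} + c h\,\frac{1-r^{k+1}}{1-r}$ produces \eqref{adderr:subs} and \eqref{adderr:subs:cv} once one reads off $c$ — carrying the $\frac1N\sum_i|x_i^\ast-x^\ast|^2$ piece in the naive case and only $d$ in the control-variate case — and takes the outer expectation over $\theta_0$. The genuinely new ingredient relative to Lemma \ref{lm:3} is the control of the subsampling variance; the main obstacle is that the additional $\frac{N(N-n)}{n}\tilde M^2h^2$ term must be kept inside the growth factor so that $1-r>0$ and the geometric series converges. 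This is precisely where the stability budget $Nh<1$ keeps that term subdominant and where Assumption \textbf{S4} ensures $\frac1N\sum_i|x_i^\ast-x^\ast|^2$ stays bounded as $N\to\infty$, so that the naive bound remains meaningful in the large-data limit while the control-variate bound avoids this additive penalty altogether.
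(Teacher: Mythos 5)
Your proposal follows essentially the same route as the paper's proof: a one-step Lyapunov recursion in $|\theta_k-x^\ast|^2$ whose drift is contracted via \textbf{S1}, with the subsampling error isolated through unbiasedness (your bias--variance identity $\E_\tau|G_k|^2=|\nabla U(\theta_k)|^2+\E_\tau|G_k-\nabla U(\theta_k)|^2$ is the clean form of the paper's $\pm\nabla U(\theta_k)h$ insertion), the variance term bounded by Lemma \ref{subs:estim} together with \textbf{S3} --- with the shift to $x_i^\ast$ in the naive case and none in the control-variate case --- and the resulting scalar recursion unrolled geometrically. The argument is correct; if anything, your additive term $\sum_i|x_i^\ast-x^\ast|^2$ is the version consistent with Lemma \ref{subs:estim} and Assumption \textbf{S4}, whereas the paper's displayed bound writes the unsquared $|x_i^\ast-x^\ast|$, which appears to be a typo.
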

\begin{proof}
For any unbiased estimator $\nabla\hat U$ we have
\begin{gather*}
\E\E_\tau |\theta_{k+1}-x^\ast|^2=\E\E_\tau\left| \theta_{k}+h\nabla\hat{U}(\theta_{k})\pm\nabla{U}(\theta_{k})h+\sqrt{2h}\xi_{k}-x^{\ast}\right| ^{2}\\
\le 2\E\left| \theta_{k}+\nabla{U}(\theta_{k})h+\sqrt{2h}\xi_{k}-x^{\ast}\right| ^{2}+2h^2\E\E_\tau\left| \nabla\hat{U}(\theta_{k})-\nabla{U}(\theta_{k})\right| ^{2}\\
 \le2\E\left| \theta_{k}+\nabla{U}(\theta_{k})h+\sqrt{2h}\xi_{k}-x^{\ast}\right| ^{2}+2h^2\E\E_\tau\left| \nabla\hat{U}(\theta_{k})-\nabla{U}(\theta_{k})\right| ^{2}\\
 \le\E|\theta_{k}-x^\ast|^2\cdot\left(1+M^2h^2-mh\right) + 2dh+ 2h^2\E\E_\tau\left| \nabla\hat{U}(\theta_{k})-\nabla{U}(\theta_{k})\right| ^{2}
\end{gather*}
Now we have two different cases of possible subsampling.
Naive subsampling according to Lemma \ref{subs:estim} leads to
\begin{gather*}
\E\E_\tau\left| \sum_{i=1}^n\nabla{U}_{\tau^{k}_i}(\theta_{k})-\sum_{j=1}^N\nabla{U}_{j}(\theta_{k})\right| ^{2}\le \frac{N-n}{n}\sum_{i=1}^N\E|\nabla{U}_{i}(\theta_{k})|^2\\
\le \frac{N-n}{n}\tilde M^2\sum_{i=1}^N\E|\theta_{k}-x_i^\ast|^2\le \frac{N-n}{n}\tilde M^2\left(N\E|\theta_{k}-x^\ast|^2+\sum_{i=1}^N|x_i^\ast-x^\ast|\right),
\end{gather*}
thus leading to
\begin{gather*}
\E\E_\tau |\theta_{k+1}-x^\ast|^2\le \E|\theta_{k}-x^\ast|^2\cdot\left(1+M^2h^2-mh+\frac{N(N-n)}{n}\tilde M^2h^2\right) \\+ 2dh +h^2\frac{N-n}{n}\tilde M^2\sum_{i=1}^N|x_i^\ast-x^\ast|\\
\le\E|\theta_{0}-x^\ast|^2\cdot\left(1+M^2h^2-mh+\frac{N(N-n)}{n}\tilde M^2h^2\right)^{k+1}\\
+\left(d+h\frac{N(N-n)}{n}\tilde M^2\frac1N\sum_{i=1}^N|x_i^\ast-x^\ast|\right)\frac{1-\left(1+M^2h^2-mh+\frac{N(N-n)}{n}\tilde M^2h^2\right)^{k+1}}{m-M^2h-\frac{N(N-n)}{n}\tilde M^2h}.
\end{gather*}
In the case of control variate subsampling we get
\begin{gather*}
\E\E_\tau\left| \sum_{i=1}^n\left(\nabla{U}_{\tau^{k}_i}(\theta_{k})-\nabla{U}_{\tau^{k}_i}(x^\ast)\right)-\sum_{j=1}^N\left(\nabla{U}_{j}(\theta_{k})-\nabla{U}_{j}(x^\ast)\right)\right| ^{2}\\
\le \frac{N-n}{n}\sum_{i=1}^N\E|\left(\nabla{U}_{i}(\theta_{k})-\nabla{U}_{i}(x^\ast)\right)|^2\le\frac{N(N-n)}{n}\tilde M^2\E|\theta_{k}-x^\ast|^2,
\end{gather*}
which leads to 
\begin{gather*}
\E\E_\tau |\theta_{k+1}-x^\ast|^2\le \E|\theta_{k}-x^\ast|^2\cdot\left(1+M^2h^2-mh+\frac{N(N-n)}{n}\tilde M^2h^2\right) +2dh \\
\le\E|\theta_{0}-x^\ast|^2\cdot\left(1+M^2h^2-mh+\frac{N(N-n)}{n}\tilde M^2h^2\right)^{k+1}\\
+d\frac{1-\left(1+M^2h^2-mh+\frac{N(N-n)}{n}\tilde M^2h^2\right)^{k+1}}{m-M^2h-\frac{N(N-n)}{n}\tilde M^2h}.
\end{gather*}
\end{proof}
Lemma \ref{lem:LyapunovForTalor} suggests a guideline for keeping the $\E\E_\tau\left| X_{k}-x^{\ast}\right| ^{2}$ proportional to $\frac1N$ in case of the scheme \eqref{eq:ET:subs}. One could choose $h=\min\left(\epsilon,\cfrac1N,\cfrac{m}{4M^2}\right)$, while batchsize could be chosen as $n=4\cfrac{N\tilde M^2}{m+\tilde M^2}$, while for the case of naive subsampling in scheme \eqref{eq:euler:subs} one should keep $n\asymp N^2\cdot h$, which agrees with the observation \eqref{main:res}.
\begin{corollary}\label{cor:bias:subs}
Assume, that in the schemes \eqref{eq:euler:subs} and \eqref{eq:ET:subs} we start at the initial condition $\theta_0$, such that $\E|\theta_{0}-x^\ast|^2\preceq \frac1N$. Under the assumptions for Theorem \ref{bias:determ} and assumptions \textbf{S3} and \textbf{S4} we get in the bounds for the error \eqref{eq:bias:thm1} and \eqref{eq:bias:thm2} an additional additive term of order $\cfrac{N(N-n)}{n}\cdot h^2$ for the scheme \eqref{eq:euler:subs}, and $\cfrac{N-n}{n}\cdot h^2$ for the scheme \eqref{eq:ET:subs}.
\end{corollary}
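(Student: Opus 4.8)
The plan is to rerun the one-step error analysis of Theorem~\ref{bias:determ} with the stochastic gradient in place of the full gradient, and to show that the only genuinely new contribution is an additive variance term of the advertised order. Write $\nabla\hat U(\theta_k)$ for the stochastic gradient used in \eqref{eq:euler:subs} or \eqref{eq:ET:subs}, set $e_k := X_{t_k}-\theta_k$, and introduce the subsampling error $\mathcal{S}_k := \nabla U(\theta_k)-\nabla\hat U(\theta_k)$. Adding and subtracting $\nabla U(\theta_k)h$ in the update, the error recursion becomes
\[
e_{k+1} = e_k + \left(\nabla U(X_{t_k})-\nabla U(\theta_k)\right)h + \mathcal{R}_k + \mathcal{S}_k h ,
\]
which is exactly the recursion from the proof of Theorem~\ref{bias:determ} perturbed by the single extra term $\mathcal{S}_k h$.

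First I would square and take expectations, averaging over the fresh subsample via $\E_\tau$. Since $\tau^k$ is independent of the past (hence of $e_k$, $\theta_k$) and of the Brownian increment on $[t_k,t_{k+1}]$ (hence of $\mathcal{R}_k$), and since $\E_\tau\mathcal{S}_k=0$ by unbiasedness, every cross term linear in $\mathcal{S}_k$ — in particular $\langle e_k,\mathcal{S}_k\rangle$ and $\langle \mathcal{R}_k,\mathcal{S}_k\rangle$ — vanishes in expectation. Consequently the contraction factor $(1-(2m-2M^2h-M)h)$ and the residual term ($\alpha_k$ or $\beta_k$) reproduce verbatim the right-hand sides of \eqref{eq:bias:thm1} and \eqref{eq:bias:thm2}, and the only new contribution is the additive quantity $h^2\,\E\E_\tau|\mathcal{S}_k|^2$.

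Next I would bound $\E_\tau|\mathcal{S}_k|^2$ with Lemma~\ref{subs:estim}. For the naive scheme \eqref{eq:euler:subs} take $a_i=\nabla U_i(\theta_k)$ and use \textbf{S3} with $|\nabla U_i(\theta_k)|\le \tilde M|\theta_k-x_i^\ast|$ and $|\theta_k-x_i^\ast|^2\le 2|\theta_k-x^\ast|^2+2|x^\ast-x_i^\ast|^2$; taking expectations and invoking \textbf{S4} to control $\tfrac1N\sum_i|x_i^\ast-x^\ast|^2$ gives
\[
\E\E_\tau|\mathcal{S}_k|^2 \preceq \frac{N(N-n)}{n}\,\tilde M^2\left(\E|\theta_k-x^\ast|^2 + 1\right).
\]
For the control variate scheme \eqref{eq:ET:subs} instead take $a_i=\nabla U_i(\theta_k)-\nabla U_i(x^\ast)$, so \textbf{S3} yields $|a_i|\le \tilde M|\theta_k-x^\ast|$ with no residual $x_i^\ast$ term, whence $\E\E_\tau|\mathcal{S}_k|^2 \preceq \tfrac{N(N-n)}{n}\tilde M^2\,\E|\theta_k-x^\ast|^2$. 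Under the hypothesis $\E|\theta_0-x^\ast|^2\preceq \tfrac1N$, Lemma~\ref{lem:LyapunovForTalor} propagates $\E|\theta_k-x^\ast|^2\preceq\tfrac1N$ uniformly in $k$; substituting this, the additive term is $h^2\cdot\tfrac{N(N-n)}{n}$ for \eqref{eq:euler:subs} — the persistent ``$+1$'' coming from \textbf{S4} being exactly what blocks any cancellation of the $N$ factor — and $h^2\cdot\tfrac{N(N-n)}{n}\cdot\tfrac1N=\tfrac{N-n}{n}h^2$ for \eqref{eq:ET:subs}.

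The main obstacle is the uniform-in-$k$ second moment bound $\E|\theta_k-x^\ast|^2\preceq\tfrac1N$, which is not automatic: Lemma~\ref{lem:LyapunovForTalor} delivers it only while the effective denominator $m-M^2h-\tfrac{N(N-n)}{n}\tilde M^2 h$ stays positive and the driving numerator remains $\mathcal O(1)$, which for the naive scheme pins the batch size to $n\asymp N^2h$ (in agreement with \eqref{main:res} and the guideline following Lemma~\ref{lem:LyapunovForTalor}) but is essentially free for the control variate scheme. The remaining care is purely bookkeeping: confirming, via the independence and unbiasedness of $\tau^k$, that the subsampling noise enters as a clean additive $h^2\E\E_\tau|\mathcal{S}_k|^2$ and does not degrade the contraction factor inherited from Theorem~\ref{bias:determ}.
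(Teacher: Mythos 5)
Your proposal is correct and follows essentially the same route as the paper: the same error recursion with the subsampling perturbation entering additively, the cross terms killed by unbiasedness and independence of $\tau^k$, the second moment of the subsampling error bounded via Lemma~\ref{subs:estim} together with \textbf{S3} (and \textbf{S4} for the naive scheme), and the uniform bound $\E|\theta_k-x^\ast|^2\preceq 1/N$ supplied by Lemma~\ref{lem:LyapunovForTalor}. If anything, your explicit treatment of the vanishing cross terms and of the stability constraint needed for Lemma~\ref{lem:LyapunovForTalor} to apply is more careful than the paper's own argument.
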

\begin{proof}
By the nature of the proof of the Theorem \ref{bias:determ}, the estimates for $|\E_k\mathcal{R}_k|$ and $\E_k\mathcal{R}^2_k$ stay the same, as these estimates are for the true process. 
\begin{align*}
| e_{k+1} |^2 & =  |e_k|^2  + 2 \langle e_k ,   \nabla U(X_{t_k}) - \nabla U_n^N(\theta_k) \rangle h + 2 \langle e_k, \mathcal{R}_k \rangle + | \nabla U(X_{t_k}) - \nabla U_n^N(\theta_k)|^2 h^2  \\
& + |\mathcal{R}_k|^2  + 2 \langle \mathcal{R}_k ,   \nabla U(X_{t_k}) - \nabla U_n^N(\theta_k) \rangle h\\
 &\le |e_k|^2  + 2 \langle e_k ,   \nabla U(X_{t_k}) - \nabla U_n^N(\theta_k) \rangle h + 2 \langle e_k, \mathcal{R}_k \rangle + | \nabla U(X_{t_k}) - \nabla U(\theta_k)|^2 h^2  \\
& + |\mathcal{R}_k|^2  + 2 \langle \mathcal{R}_k ,   \nabla U(X_{t_k}) - \nabla U_n^N(\theta_k) \rangle h + | \nabla U(\theta_k) - \nabla U_n^N(\theta_k)|^2h^2.
\end{align*}
In the case of naive subsampling we set $\nabla U_n^N(\theta_k) = \frac{N}{n}\sum_{i=1}^{n} \nabla U_{\tau^{k}_{i}}(\theta_k)$, hence
% \begin{gather*}
% \E_\tau\E\E_k| \nabla U(\theta_k) - \nabla U_n^N(\theta_k)|^2=\E_\tau\E| \nabla U(\theta_k) - \nabla U_n^N(\theta_k)|^2\\
% \le 2\E| \nabla U(\theta_k) -\nabla U(X_k)|^2+ 2\E_\tau\E|\nabla U(X_k)- \nabla U_n^N(\theta_k)|^2\\
% \le 2M^2\E|e_k|^2+4\E|\nabla U(X_k)-\nabla U(x^\ast)|^2+4\E_\tau\E\left|\frac{N}{n}\sum_{i=1}^{n} \nabla U_{\tau^{k}_{i}}(x_{\tau^{k}_{i}}^\ast)- \nabla U_n^N(\theta_k)\right|^2\\
% \le 2M^2\E|e_k|^2+4\E|\nabla U(X_k)-\nabla U(x^\ast)|^2+8\frac{N^2}{n^2}\E_\tau\sum_{i=1}^{n}\E\left| \nabla U_{\tau^{k}_{i}}(x_{\tau^{k}_{i}}^\ast)- \nabla U_n^N(\theta_k)\right|^2 \label{eq:strongerrorsub}\\
% \le 2M^2\E|e_k|^2+4M^2\cdot\left( e^{-2 m t }\E[ |X_0-x^\ast|^2 ] + \frac{d}{m}\right)
% +8\frac{N^2}{n^2}\tilde M^2 \E_\tau\sum_{i=1}^n\E \left[|x_{\tau^{k}_{i}}^\ast-\theta_{\tau^{k}_{i}}|^2\right]\\
% \le 2M^2\E|e_k|^2+4M^2\cdot\left( e^{-2 m t }\E[ |X_0-x^\ast|^2 ] + \frac{d}{m}\right)
% \\+8\frac{N^2}{n^2}\tilde M^2 \E_\tau\sum_{i=1}^n\E\left[\left(1+\tilde M^2h^2-m_{\tau^{k}_{i}}h\right)^{k+1}\cdot|\theta_0-x_{\tau^{k}_{i}}^\ast|\right]
% \\+8\frac{N^2}{n^2}\tilde M^2\E_\tau\sum_{i=1}^n\left[d\cdot \frac{1-\left(1+\tilde M^2h^2-m_{\tau^{k}_{i}}h\right)^{k+1}}{2m_{\tau^{k}_{i}}-\tilde M^2h}\right],
% \end{gather*}
\begin{gather*}
\E_\tau\E\E_k| \nabla U(\theta_k) - \nabla U_n^N(\theta_k)|^2=\E_\tau\E| \nabla U(\theta_k) - \nabla U_n^N(\theta_k)|^2\\
\le \frac{N-n}{n}\tilde M^2\sum_{i=1}^N\E|\theta_{k}-x_i^\ast|^2\le \frac{N-n}{n}\tilde M^2\left(N\E|\theta_{k}-x^\ast|^2+\sum_{i=1}^N|x_i^\ast-x^\ast|\right)\\
\preceq \frac{N(N-n)}{n}\tilde M^2\left(E|\theta_{0}-x^\ast|^2
+\frac{d}{m}+\frac1N\sum_{i=1}^N|x_i^\ast-x^\ast|\right)\preceq \frac{N(N-n)}{n}
\end{gather*}
and due to the estimate \eqref{adderr:subs} from Lemma \ref{lem:LyapunovForTalor} and Assumption \textbf{S4}.

In the case of control variate subsampling in the scheme \eqref{eq:ET:subs} we have for the new term $\nabla U_n^N(\theta_k)=\frac{N}{n}\sum_{i=1}^n \left(\nabla U_{\tau^k_i}\left(\theta\right)- \nabla U_{\tau^k_i}\left(x^\ast\right)\right)$, and the result follows immediately due to the estimate \eqref{adderr:subs:cv} from Lemma \ref{lem:LyapunovForTalor}.
\end{proof}
% Notice, that in the case of subsampling we lose half order of the strong convergence. This means, that instead of setting $h\asymp\min(\epsilon,N^{-1}),$ we have to set $h\asymp\min(\epsilon^2,N^{-1}).$ The possibility of restoring the order of strong convergence without introducing additional assumptions is a topic of the future studies.
\begin{remark}
Under the condition $M\asymp m\asymp m_0\asymp N,$ we have for the naive subsampling
\begin{gather*}
\E[| e_{k+1} |^2] \leq  (1 - \eta)^{k+1}\E [ |e_0|^2] +h\cdot(1-(1-\eta)^k) + \frac{N(N-n)}{n}\cdot h^2\cdot(1-(1-\eta)^k),
\end{gather*}
which means that $n\asymp N$ is the right choice, or at least $n\ge N^2\cdot h$, which agrees with the observation \eqref{main:res} to keep the squared bias of order $\frac1N$. The variance for the naive subsampling is a simple update of the bound \eqref{variance:bound} to
$$\E_k |\eta_{k+1}|^2\le |\eta_k|^2\cdot(1-2hm+M^2h^2M^2/n^2)+h,$$
which means that we need $$M^4/n^2\cdot h^2\le 1\Rightarrow n\ge M^2\cdot h,$$ which also agrees with \eqref{main:res}.
\end{remark}

For the variance of the scheme \eqref{eq:ET:subs} we immediately have 
\begin{theorem}
Let $f$ be a Lipschitz functional with Lipschitz coefficient $L$. Then for a process $\theta_k$, given by \eqref{eq:ET:subs} with $h\le \frac{m}{4M^2}$ and $\theta_0=x^\ast$, we have
$$\V f(\theta_k)\le c\cdot d\cdot\left(1+h\frac{N-n}{n}\tilde M^2\right)\cdot L^2\cdot \frac{1-(1-2hm+M^2h^2)^{k+1}}{2m-M^2h},$$
where $c$ depends only on the dimensionality of the process $\theta_k$.
\end{theorem}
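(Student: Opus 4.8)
The plan is to reproduce the coupling argument of Theorem~\ref{var:euler:est}, now keeping track of the extra minibatch fluctuation that the control variate only partially suppresses. First I would introduce two independent realisations $\theta_k$ and $\bar\theta_k$ of the scheme \eqref{eq:ET:subs}, driven by independent Brownian increments $\Delta W_{k+1},\Delta\bar W_{k+1}$ and by independent minibatches, both started from $\theta_0=\bar\theta_0=x^\ast$ so that $\eta_0:=\theta_0-\bar\theta_0=0$. Since $\theta_k$ and $\bar\theta_k$ are identically distributed, $\V f(\theta_k)=\tfrac12\E(f(\theta_k)-f(\bar\theta_k))^2\le L^2\,\E|\eta_k|^2$, so it suffices to propagate a recursion for $\E|\eta_k|^2$, exactly as in the full-gradient case.

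Next, denote the control-variate stochastic gradient by $\nabla\hat U(\theta_k)=\frac{N}{n}\sum_{i=1}^n(\nabla U_{\tau_i^k}(\theta_k)-\nabla U_{\tau_i^k}(x^\ast))$, which is unbiased for $\nabla U(\theta_k)$ by the identity preceding Lemma~\ref{lm:3}. I would square the one-step update $\eta_{k+1}=\eta_k+h(\nabla\hat U(\theta_k)-\nabla\hat{\bar U}(\bar\theta_k))+\sqrt2(\Delta W_{k+1}-\Delta\bar W_{k+1})$ and take the conditional expectation $\E_k$. Unbiasedness kills the minibatch contribution to the cross term, leaving $2h\langle\eta_k,\nabla U(\theta_k)-\nabla U(\bar\theta_k)\rangle$, which \textbf{S1} bounds by $-2mh|\eta_k|^2$; the Brownian contribution is $\preceq dh$. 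For the $h^2$ term I would decompose $\nabla\hat U(\theta_k)-\nabla\hat{\bar U}(\bar\theta_k)$ into the true-gradient difference plus the two centred subsampling errors, which are independent and have zero $\E_k$-mean, so the cross terms vanish and
\begin{align*}
\E_k|\nabla\hat U(\theta_k)-\nabla\hat{\bar U}(\bar\theta_k)|^2
&=|\nabla U(\theta_k)-\nabla U(\bar\theta_k)|^2\\
&\quad+\E_k|\nabla\hat U(\theta_k)-\nabla U(\theta_k)|^2+\E_k|\nabla\hat{\bar U}(\bar\theta_k)-\nabla U(\bar\theta_k)|^2.
\end{align*}
The first term is $\le M^2|\eta_k|^2$ by \textbf{S2}; each subsampling variance I would bound via Lemma~\ref{subs:estim} applied to $a_i=\nabla U_i(\theta_k)-\nabla U_i(x^\ast)$ together with \textbf{S3}, yielding $\preceq \frac{N(N-n)}{n}\tilde M^2|\theta_k-x^\ast|^2$ (and likewise for $\bar\theta_k$), with no reliance on \textbf{S4} because the control variate centres at the global mode.

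The crucial point, and the step I expect to be the main obstacle, is that the subsampling variance is proportional to $|\theta_k-x^\ast|^2$ rather than to $|\eta_k|^2$: it does not feed into the geometric decay rate but enters as a \emph{source} term that must be controlled uniformly in $k$. Here I would invoke the $x^\ast$-centred Lyapunov bound of Lemma~\ref{lm:3}, which — since $\theta_0=x^\ast$ and $h\le\frac{m}{4M^2}$ forces $m-M^2h\ge\tfrac34 m$ — gives $\E|\theta_k-x^\ast|^2\le d/(m-M^2h)\preceq d/m$, and similarly for $\bar\theta_k$. Thus the expected source term is $\preceq dh+h^2\frac{N(N-n)}{n}\tilde M^2\,\frac{d}{m}$, and it is exactly the cancellation $\frac{N}{m}\asymp1$ (using that $m$ grows linearly in $N$) that collapses $\frac{N(N-n)}{n}$ to $\frac{N-n}{n}$ and produces the factor $(1+h\frac{N-n}{n}\tilde M^2)$ in the statement.

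Finally I would assemble the affine recursion $\E|\eta_{k+1}|^2\le(1-2mh+M^2h^2)\,\E|\eta_k|^2+S$ with $S\preceq dh\bigl(1+h\frac{N-n}{n}\tilde M^2\bigr)$ and $\eta_0=0$, and solve it as a geometric series to obtain $\E|\eta_k|^2\le \frac{S}{h(2m-M^2h)}\bigl(1-(1-2mh+M^2h^2)^{k+1}\bigr)$. Multiplying by $L^2$ and cancelling the common factor $h$ between numerator and denominator yields precisely the claimed bound, where the constant $c$ absorbs the numerical factors coming from Young's inequality, Lemma~\ref{subs:estim}, and the Brownian term.
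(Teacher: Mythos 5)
Your proposal follows essentially the same coupling argument as the paper's proof: two independent copies of \eqref{eq:ET:subs}, the bound $\V f(\theta_k)\le L^2\,\E|\eta_k|^2$, unbiasedness to reduce the cross term to the true gradients, and the decomposition of the squared stochastic-gradient difference into the Lipschitz part plus the subsampling variances, which are then controlled through $\E|\theta_k-x^\ast|^2\preceq d/m$ and the cancellation $N/m\asymp 1$ — indeed you spell out the source-term step that the paper leaves implicit. The only small correction is that the $x^\ast$-centred moment bound should be taken from Lemma \ref{lem:LyapunovForTalor} (equation \eqref{adderr:subs:cv}, which covers the subsampled dynamics) rather than Lemma \ref{lm:3}, which is stated for the full-gradient discretisation \eqref{eq:diffusion:ED}.
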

\begin{proof}
Consider two independent Euler discretisation $\theta_{k}$ and $\bar \theta_{k}$, given by \eqref{eq:ET:subs}.
Let us again denote by $\eta_k=\theta_k-\bar\theta_k$ along with $\E_k[\cdot] := \E[\cdot | \mathcal{F}_k]$, where $(\mathcal{F}_k)_{k\geq 0}$ is a natural filtration generated by Brownian increments.
Then we have
\begin{gather}
\E_k \E_\tau|\eta_{k+1}|^2 \le \E_\tau|\eta_k|^2\notag\\
 + 2h\E_k\langle\eta_k,\E_\tau\nabla U_n^N(\theta_k)-\E_\tau\nabla U_n^N(\bar\theta_k)\rangle+\E_k\E_\tau|\nabla U_n^N(\theta_k)-\nabla U_n^N(\bar\theta_k)|^2h^2 + 8dh\notag \\
\le \E_\tau|\eta_k|^2\cdot(1-2hm+M^2h^2)+dh+2\E_k\E_\tau|\nabla U_n^N(\theta_k)-\nabla U(\theta_k)|^2h^2\notag\\
\preceq \E_\tau|\eta_k|^2\cdot(1-2hm+M^2h^2)+dh + h^2\frac{N-n}{n}\tilde M^2d,
\end{gather}
which concludes the proof.
\end{proof}

\begin{theorem}\label{thm:sgldcv}
Assume, that Assumption \textbf{S1}-\textbf{S4} are fulfilled along with Assumptions from Corollary \ref{cor:bias:subs}. Then the overall cost for the estimator $\frac1P\sum\limits_{i=1}^P f(\theta^i_{T,h,n})$ to achieve MSE of order $\epsilon^2$, where $f$ is a $1-$Lipschitz function is given by
$$\mathrm{cost}\left(\frac1P\sum\limits_{i=1}^P f(\theta^i_{T,h,n})\right)\preceq
\begin{cases}
\log \epsilon^{-1}\cdot N ,  &\epsilon^{-1}\preceq \sqrt{N}\\
\epsilon^{-2}\cdot \log \epsilon^{-1}, &\sqrt{N}\preceq\epsilon^{-1}\le N^2\\
\frac{\epsilon^{-3}}{N^2}\cdot \log \epsilon^{-1}, &N^2\preceq\epsilon^{-1}
\end{cases},$$
if $\theta^i_{T,h,n}$ is generated with \eqref{eq:euler:subs} and
$$\mathrm{cost}\left(\frac1P\sum\limits_{i=1}^P f(\theta^i_{T,h,n})\right)\preceq
\begin{cases}
\log \epsilon^{-1},  &\epsilon^{-1}\preceq \sqrt{N}\\
\epsilon^{-2}\cdot N^{-1}\cdot \log \epsilon^{-1}, &\sqrt{N}\preceq\epsilon^{-1}\le N\\
\frac{\epsilon^{-3}}{N^2}\cdot \log \epsilon^{-1}, &N\preceq\epsilon^{-1}
\end{cases},,$$
if $\theta^i_{T,h,n}$ is generated with \eqref{eq:ET:subs}.
\label{theorem:compl2}
\end{theorem}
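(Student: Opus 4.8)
The plan is to control the mean squared error through the exact splitting $\E(\mathcal{M}_{P,T,h}(f)-\pi(f))^2=(\E f(\theta_T)-\pi(f))^2+\tfrac1P\V f(\theta_T)$, and then to choose the four free parameters $T,h,n,P$ as functions of $\epsilon$ and $N$ so that the right-hand side is at most $\epsilon^2$ at minimal cost $P\cdot(T/h)\cdot n$. Throughout I would use $m\asymp M\asymp N$ and $\tilde M\asymp1$ coming from \textbf{S3}. The guiding principle, already visible in the Gaussian computation \eqref{eq:meanvariance}, is that the subsampling increment is mean-zero and hence enters the \emph{variance} $\V f(\theta_T)$ rather than the bias; being independent across the $P$ paths it is damped by $1/P$, whereas the systematic error is not. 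Getting this bookkeeping right is the heart of the argument.

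For the bias I would write $|\E f(\theta_T)-\pi(f)|\le|\E f(\theta_T)-\E f(X_T)|+|\E f(X_T)-\pi(f)|$. Starting each chain at the mode, Corollary \ref{def:wass-decay-rate:measures} bounds the second (finite-time) term by $\int|x_0-x|\pi(x)\,dx\cdot e^{-mT}\asymp N^{-1/2}e^{-mT}$; the first term is the weak error of the Euler scheme and is of order $h$, as quantified by Theorem \ref{bias:determ} (for Lipschitz $f$ one may alternatively bound it by $L(\E|e_K|^2)^{1/2}$, which is $\asymp h$ since additive noise renders Euler strong order one). Demanding both be $\preceq\epsilon$ forces $e^{-mT}\preceq\epsilon\sqrt N$, i.e. $T\asymp N^{-1}\log\epsilon^{-1}$, together with $h\preceq\epsilon$; intersecting the latter with the stability/Lyapunov constraint $h\le m/M^2\asymp N^{-1}$ gives $h\asymp\min(\epsilon,N^{-1})$, whence $T/h\asymp\log\epsilon^{-1}\cdot\max(1,\epsilon^{-1}/N)$.

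For the variance I would take two independent copies and pass to the steady state, exactly as in Theorem \ref{var:euler:est} and its two subsampling refinements (the naive recursion stated in the Remark following Corollary \ref{cor:bias:subs}, and the control-variate variance Theorem immediately preceding this one). The base level is $\V f(\theta_T)\asymp d/m\asymp N^{-1}$, in agreement with Brascamp--Lieb, and the subsampling adds $\asymp\frac{(N-n)h}{n}$ for the naive scheme \eqref{eq:euler:subs} but only $\asymp h/n$ for the control-variate scheme \eqref{eq:ET:subs}. To keep the total variance at the base level one thus needs $n\succeq N^2h$ for naive subsampling --- precisely the back-of-the-envelope relation \eqref{main:res} --- but merely $n\succeq Nh$, hence a \emph{constant} batch $n\asymp1$ once $h\preceq N^{-1}$, for control variates; this is the only step where the two schemes part ways. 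Finally $\V f(\theta_T)\asymp N^{-1}$ and $\tfrac1P\V f(\theta_T)\preceq\epsilon^2$ fix $P\asymp\max(1,N^{-1}\epsilon^{-2})$, with the crossover $P=1$ at $\epsilon^{-1}\asymp\sqrt N$.

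It remains to substitute into $\mathrm{cost}=P\cdot(T/h)\cdot n$ and split into cases according to which constraint binds. For the naive scheme $n\asymp\max(1,N^2\min(\epsilon,N^{-1}))$, so $n\asymp N$ while $\epsilon^{-1}\le N$, then $n\asymp N^2\epsilon$ while $N\le\epsilon^{-1}\le N^2$, and $n\asymp1$ once $\epsilon^{-1}\ge N^2$ because the batch cannot fall below one; multiplying out reproduces $\log\epsilon^{-1}\cdot N$, then $\epsilon^{-2}\log\epsilon^{-1}$, then $N^{-2}\epsilon^{-3}\log\epsilon^{-1}$. For the control-variate scheme $n\asymp1$ throughout, and the same arithmetic yields $\log\epsilon^{-1}$, $N^{-1}\epsilon^{-2}\log\epsilon^{-1}$, and $N^{-2}\epsilon^{-3}\log\epsilon^{-1}$, now with breakpoints only at $\epsilon^{-1}=\sqrt N$ and $\epsilon^{-1}=N$. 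I expect the main difficulty to be conceptual rather than computational: one must charge the subsampling fluctuation to the variance (so it is averaged away by the $P$ independent paths) and not to the bias, for otherwise a glance at the regime $\epsilon^{-1}>N^2$ shows the error would be irreducible and the cost strictly worse; keeping the three thresholds on $h$ and $n$ (stability $h\le N^{-1}$, accuracy $h\le\epsilon$, minimum batch $n\ge1$) aligned uniformly in $N$ is exactly what manufactures the breakpoints $\sqrt N$, $N$ and $N^2$.
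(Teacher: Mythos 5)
Your proposal follows essentially the same route as the paper: the paper's proof of Theorem \ref{theorem:compl2} consists precisely of substituting $P\asymp\max(\epsilon^{-2}/N,1)$, $T\asymp N^{-1}\log\epsilon^{-1}$, $h\asymp\min(\epsilon,N^{-1})$ and $n\asymp\max(N^{2}\min(\epsilon,N^{-1}),1)$ (resp.\ $\max(N\min(\epsilon,N^{-1}),1)$ for control variates) into $\mathrm{cost}=P\cdot(T/h)\cdot n$ and splitting into cases, exactly as you do; the justification of these four choices is delegated to Proposition \ref{def:wass-decay-rate:measures}, Theorem \ref{bias:determ}, Lemma \ref{lem:LyapunovForTalor} and Corollary \ref{cor:bias:subs}, which is the material you reconstruct. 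Your breakpoints and rates agree with the paper's in every regime.

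The one point where you diverge from the paper's formal chain is also the one step that is genuinely delicate: the attribution of the subsampling fluctuation. You charge it entirely to $\V f(\theta_T)$ on the grounds that the stochastic gradient is mean-zero, so that it is damped by $1/P$; the paper instead routes it through Corollary \ref{cor:bias:subs} as an additive $\tfrac{N(N-n)}{n}h^{2}$ (resp.\ $\tfrac{N-n}{n}h^{2}$) term in the strong-error recursion for $\E|e_k|^2$, which feeds the bias via $L(\E|e_K|^2)^{1/2}$. Neither accounting is complete as written. For a nonlinear Lipschitz $f$, ``mean-zero noise does not bias $\E f(\theta_T)$'' is false (take $f(x)=|x|$), so your step needs an argument beyond unbiasedness of the gradient estimator; conversely, the paper's route, taken literally, leaves an accumulated term of order $\tfrac{(N-n)h}{nN}$ inside the \emph{squared} bias, which for the control-variate scheme with $n\asymp1$ is $\asymp h\succeq\epsilon^2$ and therefore does not deliver the stated rate already for $\sqrt{N}\preceq\epsilon^{-1}$, and for the naive scheme breaks down once $\epsilon^{-1}\succeq N$ (where $n<N$). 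In other words, your reallocation of this term to the variance is not a cosmetic choice but is load-bearing for the theorem as stated, and it is precisely the step that neither your proposal nor the paper proves for general Lipschitz $f$; it would need either a second-order (smoothness-based) bias expansion showing the mean-zero perturbation biases $\E f(\theta_T)$ only at the order of its variance, or a restriction of the claim to the regime $\epsilon^{-1}\preceq\sqrt N$ where $P\asymp1$ makes the two accountings coincide. Everything else in your write-up matches the paper's argument.
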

\begin{proof}
According to the \eqref{eq:cost} we get for the case of naive subsampling scheme \eqref{eq:euler:subs}
\begin{gather*}
\text{cost}_\epsilon(\mathcal{M}_{P,T,h}(f)):=   P(\epsilon)\cdot T(\epsilon)/h(\epsilon)\cdot n(\epsilon)\\\preceq -\left(\frac{\epsilon^{-2}}{N},1\right)\cdot\frac{\log \epsilon}{N}\cdot\max(\varepsilon^{-1},N)\cdot\max(N^2\cdot \min(\varepsilon,N^{-1}),1)\\
\preceq \begin{cases}
\log \epsilon^{-1}\cdot N ,  &\epsilon^{-1}\preceq \sqrt{N}\\
\epsilon^{-2}\cdot \log \epsilon^{-1}, &\sqrt{N}\preceq\epsilon^{-1}\le N^2\\
\frac{\epsilon^{-3}}{N^2}\cdot \log \epsilon^{-1}, &N^2\preceq\epsilon^{-1}
\end{cases},
\end{gather*}
and for the control variate subsampling scheme \eqref{eq:ET:subs}.
\begin{gather*}
\text{cost}_\epsilon(\mathcal{M}_{P,T,h}(f)):=   P(\epsilon)\cdot T(\epsilon)/h(\epsilon)\cdot n(\epsilon)\\\preceq -\left(\frac{\epsilon^{-2}}{N},1\right)\cdot\frac{\log \epsilon}{N}\cdot\max(\epsilon^{-1},N)\cdot\max(N\cdot \min(\epsilon,N^{-1}),1))\\
\preceq\left(\frac{\epsilon^{-2}}{N},1\right)\cdot\frac{\log \epsilon^{-1}}{N}\cdot\max(\epsilon^{-1},N)\preceq \begin{cases}
\log \epsilon^{-1},  &\epsilon^{-1}\preceq \sqrt{N}\\
\epsilon^{-2}\cdot N^{-1}\cdot \log \epsilon^{-1}, &\sqrt{N}\preceq\epsilon^{-1}\le N\\
\frac{\epsilon^{-3}}{N^2}\cdot \log \epsilon^{-1}, &N\preceq\epsilon^{-1}
\end{cases},
\end{gather*}
hence concluding the proof.
\end{proof}
The complexity in the control variate subsampling scheme \eqref{eq:ET:subs} is due to the fact, that we know the mode exactly. At the scale of interest $\epsilon\sim \frac{1}{\sqrt{N}}$ this yields the cost of $\log{N}$ this disregards the fact the we need to find the mode. However, by using Newton's algorithm this cost is proportional to $N$ with small constant. 

\section{Numerical experiments}\label{sec:experiments}
\subsection{Gaussian case}
In a first set of experiments we show that as predicted bias and variance grow very large in the Gaussian toy model introduced in section \ref{sec:euler_taylor} if $N^2h/n\asymp 1$ is violated. To show this we ran simulations with $N=10000$, $h=10^{-5}$, $T= 5\log{\epsilon^{-1}}/N$ and a range of different batch sizes. Figure \ref{fig:id_biasvar} show the bias and the variance of a single sample. As predicted the bias vanishes due to the specific properties of the Gaussian toy model. However the variance blows up with decreasing batch size. We found that this is also the case with stochastic gradient Hamiltonian Monte Carlo, another popular SGMCMC method.

Figure \ref{fig:abssin_biasvar} show the bias and variance of $|\sin X-\mu |$ where $X$ is a single sample. Note that since $| \sin (\cdot) |$ is a Lipschitz function our analysis applies. This experiment shows that if we are estimating a non-linear functional both bias and variance grow drastically if $N^2h/n \asymp 1$ does not hold. 

\begin{figure}
  \begin{subfigure}{0.47\textwidth}
    \includegraphics[width=\textwidth]{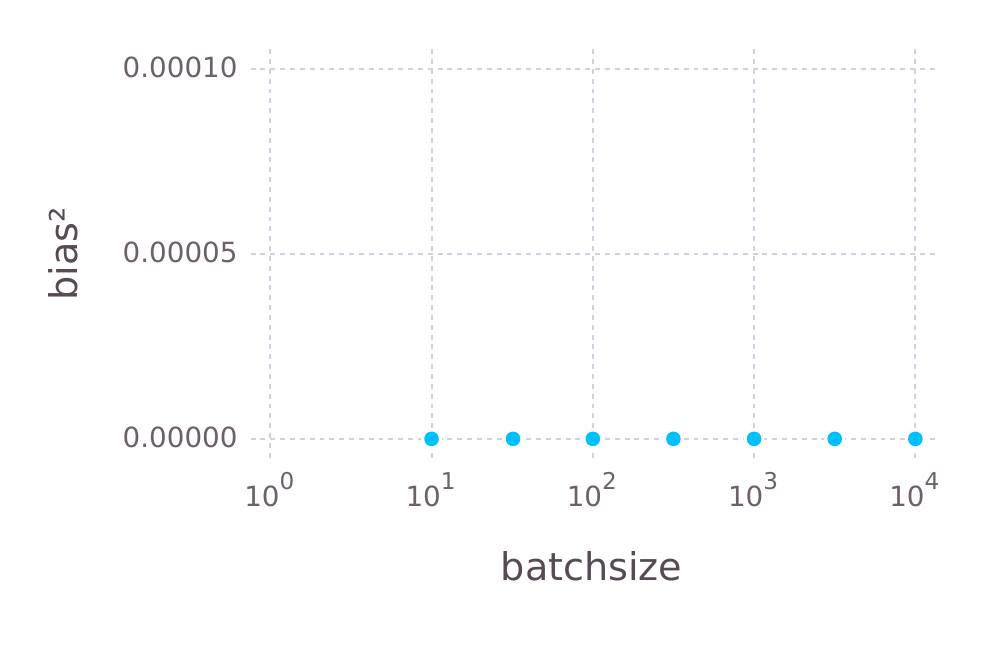}
    \caption{As predicted the bias vanishes in the case of estimating the mean of the Gaussian toy model.}
  \end{subfigure}
  \hfill
  \begin{subfigure}{0.47\textwidth}
  	\includegraphics[width=\textwidth]{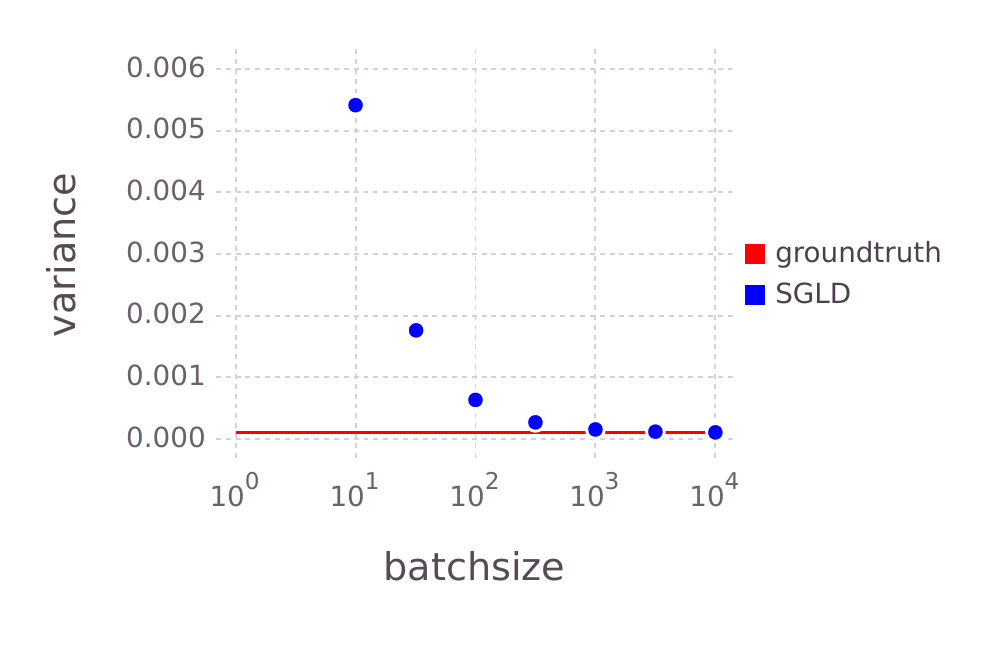}
  	\caption{The variance of the sample grows with decreasing batch size. The variance of the true posterior is shown in red. }
  \end{subfigure}
  \caption{Squared bias relative to the target distribution and variance of a single sample with $N=10000$, $h=10^{-5}$, $T= 5\log{\epsilon^{-1}}/N$ and a range of different batch sizes.}
  \label{fig:id_biasvar}
\end{figure}

\begin{figure}
  \begin{subfigure}{0.47\textwidth}
    \includegraphics[width=\textwidth]{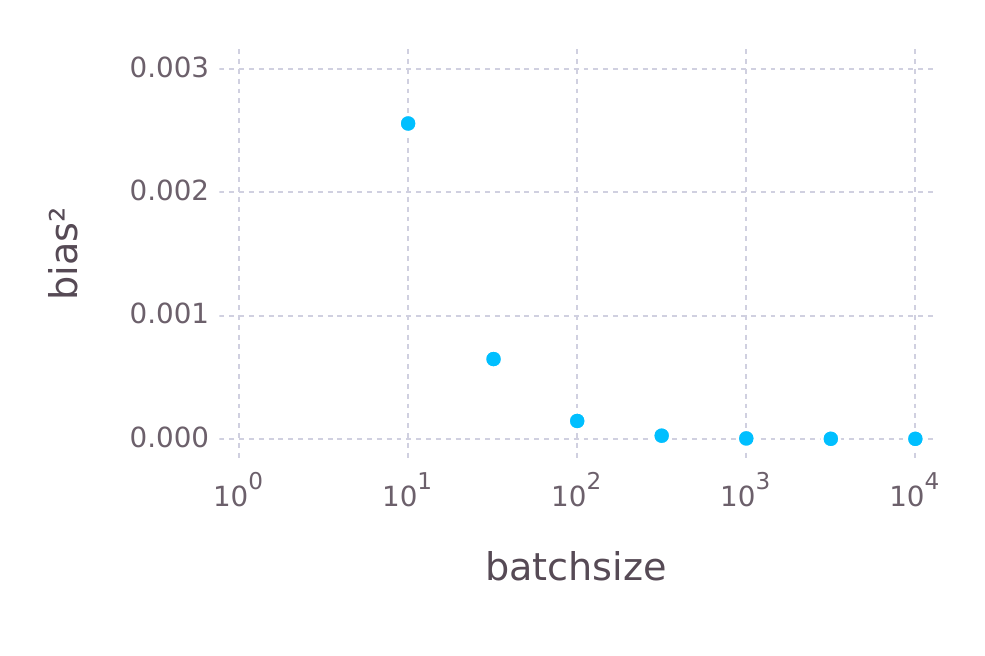}
    \caption{The squared bias of a non-linear function of the parameter grows with decreasing batchsize.}
  \end{subfigure}\hfill 
  \begin{subfigure}{0.47\textwidth}
  	\includegraphics[width=\textwidth]{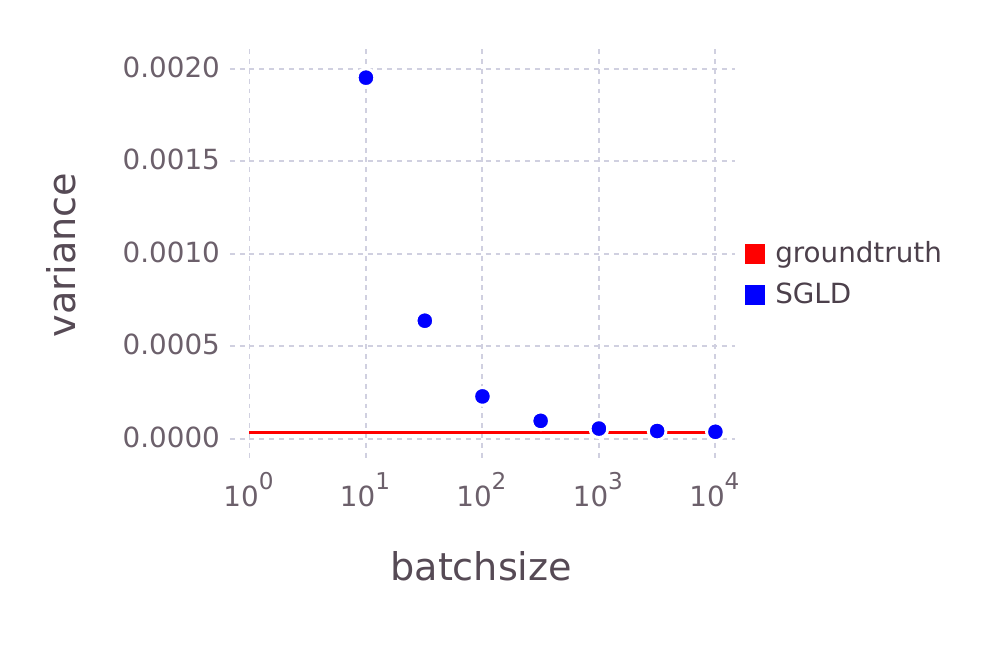}
  	\caption{The variance of the sample grows with decreasing batch size.}
  \end{subfigure}
  \caption{Squared bias and variance of $|\sin X-\mu |$ where $X$ is a single sample. single sample with $N=10000$, $h=10^{-5}$, $T= 5\log{\epsilon^{-1}}/N$ and a range of different batch sizes.}
  \label{fig:abssin_biasvar}
\end{figure}

In a second set of experiments, we verified Theorem \ref{theorem:compl}. Within the Gaussian toy model we considered estimators of $|\sin x-\mu| $ based on $P=10$ paths. For dataset sizes equally spaced on a logarithmic scale on $[10^{3},10^{5}]$ we set accuracy demand $\epsilon=1/\sqrt{N}$, integration time $T=3\log \epsilon^{-1}/N$ and consider various combinations of batchsizes $n$ and stepsizes $h$ s.t. $n/h$ corresponding to the same computational cost. Figure \ref{fig:rmse_v_batchsize} shows the estimated root mean-squared error (RMSE) divided by the accuracy demand $\epsilon$ vs the subsample size ratio for various dataset sizes. Crucially, all estimated root mean squared errors are below the accuracy demand. For a given dataset size, the root mean squared error stays roughly constant for constant computational cost. This shows that there is no gain in trading stepsize $h$ against the batch size $n$.

\begin{figure}
	\centering
	\includegraphics[width=0.8\textwidth]{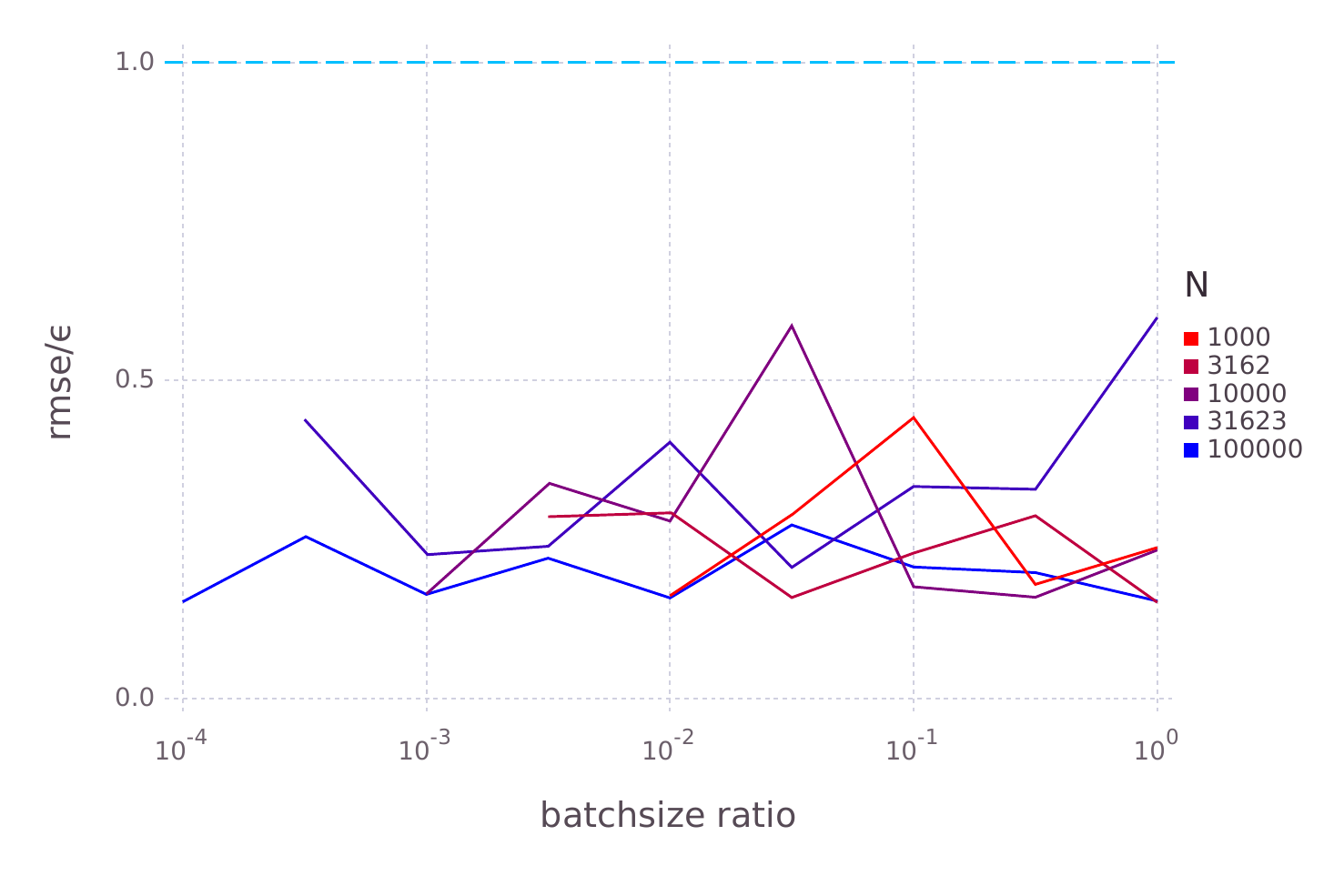}
	\caption{Root mean-squared error for estimators of $|\sin x - \mu|$ for various dataset sizes at constant computational cost.}
	\label{fig:rmse_v_batchsize}
\end{figure}
 \subsubsection{Relative bias of the standard deviation/variance estimator}
We show that the standard deviation, which is another example of the nonlinear functional, follows the results from Section \ref{EUSLCC}. 
  As we have seen in the previous sections, using stochastic gradients leads to biased variance estimates and will consistently overestimate the posterior variance. Figure \ref{fig:relbiasinvar} shows the relative bias in the variance estimator as a function of log batchsize fraction and $\log r$ where $n = r/A$ for long simulation times. Shown in red are lines of constant computational cost, which run parallel to lines of constant relative bias: subsampling does not lead to a gain.  
  \begin{figure}
  \centering
  \includegraphics[width=0.8\textwidth]{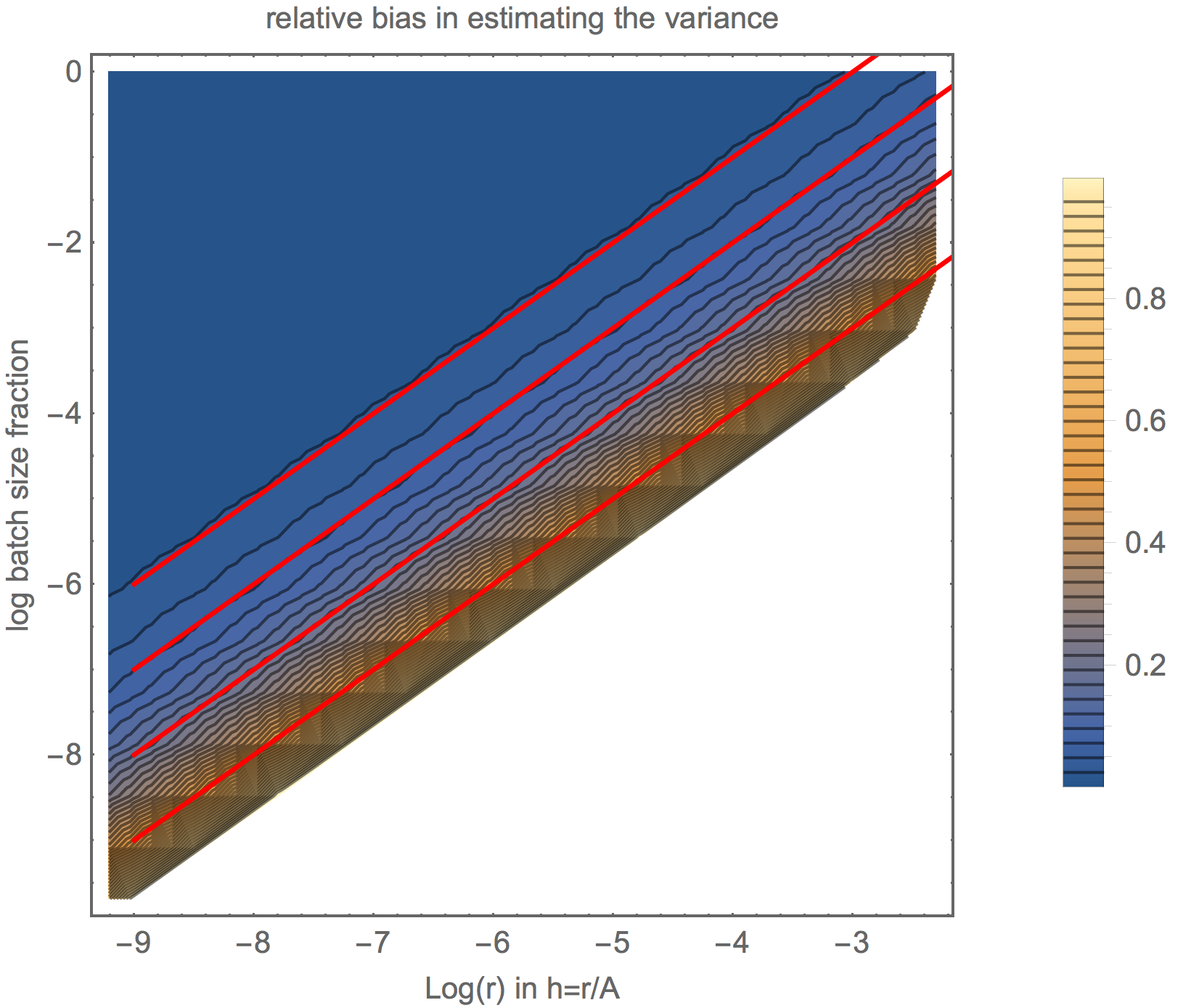} 
  \caption{Relative bias in estimating the variance in the Gaussian toy model ($N=10^6$) as a function of log batchsize fraction and $\log r$ where $h=r/A$. Shown in red are lines of constant computational cost: Subsampling does not lead to computational savings in this models. $N$ does not appear to influence this finding.}
  \label{fig:relbiasinvar}
  \end{figure}
\subsubsection{Richardson-Romberg extrapolation for SGLD}
  Recently, Richardson-Romberg extrapolation has been proposed for stochastic gradient MCMC algorithms (see \cite{Durmus2016}). Richardson-Romberg schemes reduce the bias in expectations due to the discretisation of the underlying SDE. Let $\hat{\pi}_{N,h}(f)$ be estimator of the integral $\pi(f)$ of a function $f$ with respect to a measure $\pi$. It can be shown that $\mathbb{E}\hat{\pi}_{N,h}(f) = \pi(f) + C(f,\pi, x_0) h + \mathcal{O}(h^2)$. Richardson-Romberg schemes cancel the linear term in the expansion by running two chains with stepsizes $h$ and $\frac{h}{2}$ in parallel. The Richardson-Romberg estimator is then given by $\mathbb{E}2\hat{\pi}_{N,h/2}(f)-\mathbb{E}\hat{\pi}_{N,h}(f) = \pi(f) +\mathcal{O}(h^2)$. This approach can dramatically reduce the discretisation bias but does not affect the bias due to the stochastic gradients. Figure \ref{fig:rr} shows the relative bias in estimating the variance with a Richardson-Romberg scheme applied to SGLD. Here subsampling performs worse than using the full gradient. In fact, in this scenario it appears to be optimal to choose the stepsize as large as possible while still maintaining numerical stability. This observation is important since numerical stability is easy to verify in practice while it is much harder to verify that the estimator achieves a certain target accuracy. Under appropriate assumptions this observation generalises to general log concave target distributions  greatly simplifying the choice of stepsize as $N\rightarrow \infty$. 
  \begin{figure}
  \centering
  \includegraphics[width=0.8\textwidth]{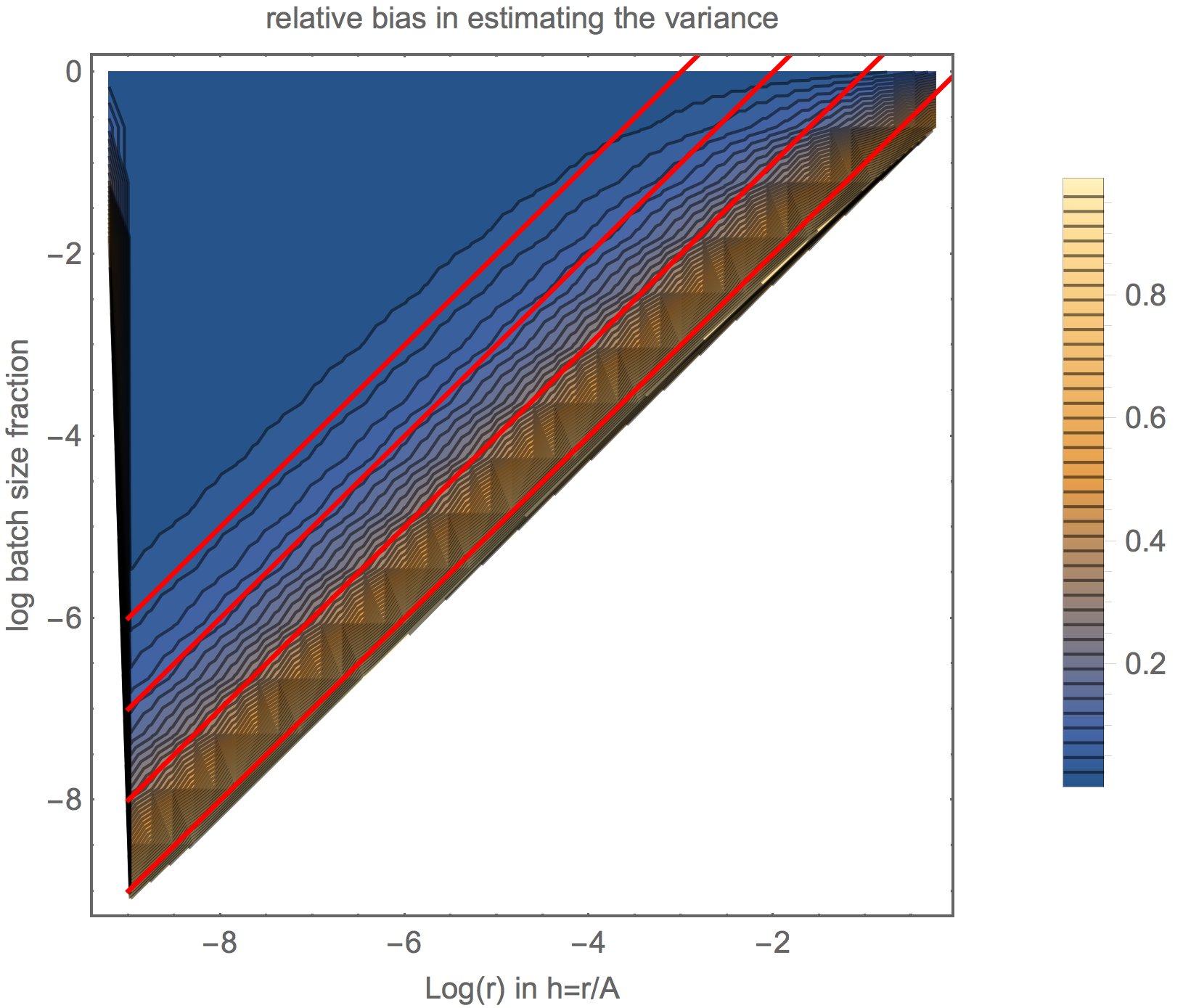} 
  \caption{Relative bias in estimating the variance in the Gaussian toy model ($N=10^6$) as a function of log batchsize fraction and $\log r$ where $h=r/A$ using a Richardson-Romberg scheme for SGLD. Shown in red are lines of constant computational cost. In this scenario, subsampling performs worse than using the full gradients. For a fixed computational budget it is preferable to use full gradients and a larger stepsize.}
  \label{fig:rr}
  \end{figure}
\subsection{Logistic regression}
Logistic regression is one of the most ubiquitous models in applied statistics and machine learning. Logistic regression is strictly but not strongly log-concave and hence not covered by our results. However, our experiments indicate that the same results apply.

To investigate the scaling of SGLD with the size of the dataset we applied logistic regression on artificial dataset sampled from the model with varying data set size but the same underlying data distribution. Given a dimension of covariates $d$ and a dataset size $N$ we first sampled covariates as follows:
\begin{align}
\mu_ {i} &\stackrel{i.i.d.}{\sim} U[0,1]\nonumber,\ i\in \{1,\ldots,d \}\\
C_{ij} &\stackrel{i.i.d.}{\sim} U[-1,1]\nonumber,\ i,j\in \{1,\ldots,d \}\\
P &= C C^T\nonumber\\
x^{(n)} &\stackrel{i.i.d.}{\sim} \mathcal{N}(\mu,P) \nonumber,\  n\in\{1,\ldots,N\}.
\end{align}
Note that by construction, $P$ is positive semi-definite and hence a valid covariance matrix. Then we sampled weights $w_j \stackrel{i.i.d.}{\sim} \mathcal{N}(0,\sigma^2),\ j\in\{ 1,\ldots,d\}$ from the prior and responses $y^{(n)}$ according to the model as $y^{(n)}\sim \mbox{Bernoulli}(s\left(w^T x^{(n)}\right),\ n\in\{1,\ldots, N\}$.
Note that we did not include an intercept term in our model. In our experiments we used $d=3$ and $\sigma^2 = 10$ and ensured the same weights and data distribution by fixing the seed of the random number generator.

We choose a constant number of paths $P=100$, an integration time of $c/N$ where $c$ is a constant, dataset sizes $N$ equally spaced on a logarithmic scale on $[10^{3},10^{5}]$ and a range of stepsizes and subsample sizes corresponding to constant computational cost. To estimate root-mean squared errors we estimated the ground truth using long MCMC runs with a Metropolis-Hastings scheme. Given a set of paths we estimated the variance of the estimators using bootstrap. Figure \ref{fig:logreg_rmse} shows the estimated root mean squared error for estimating the parameter mean and standard deviations summed over dimensions (scaled by $\sqrt{N}$ or $N$ respectively to allow different $N$ to be shown on the same graph). We get to the important conclusion, that for a fixed dataset size, subsample sizes and stepsizes corresponding to the same computational cost yield the same RMSE. 
\begin{figure}
	\begin{subfigure}{0.47\textwidth}
		\includegraphics[width = \textwidth]{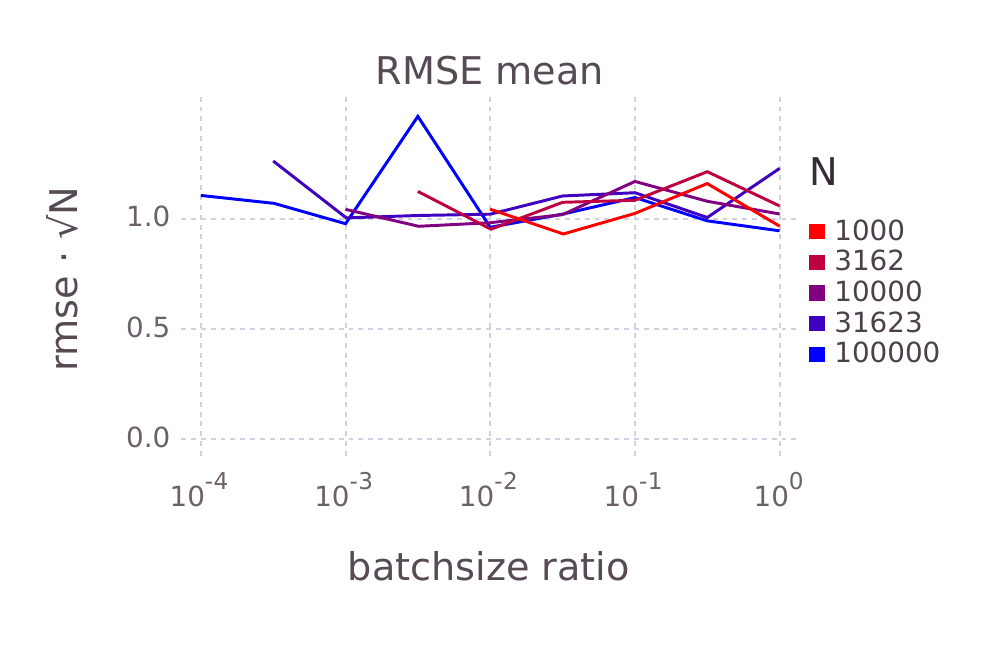}
        \caption{RMSE for estimating the parameter mean.}
	\end{subfigure}
    \hfill
    \begin{subfigure}{0.47\textwidth}
		\includegraphics[width = \textwidth]{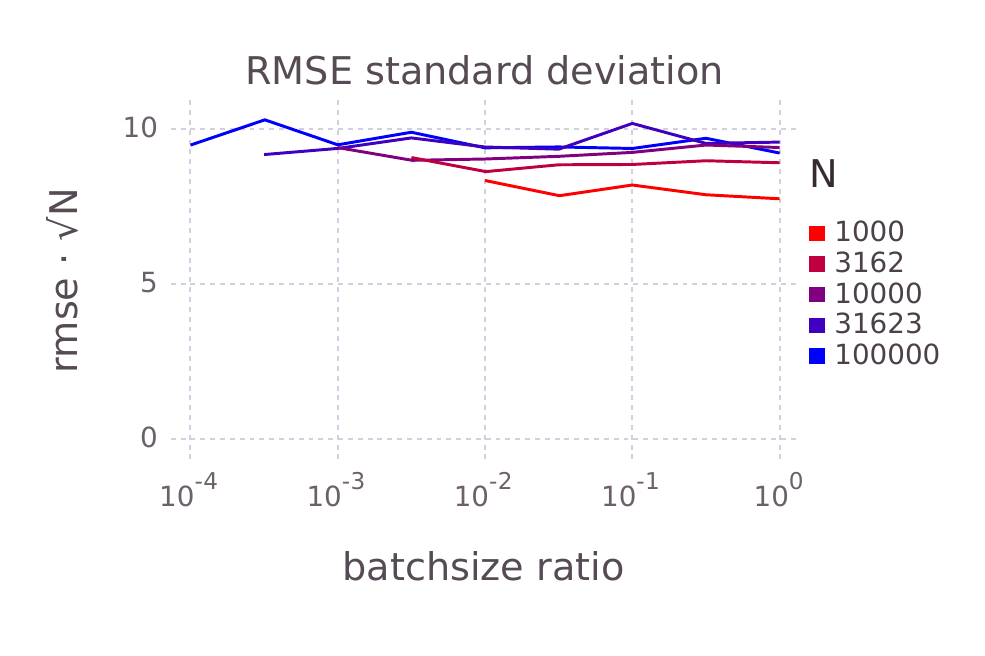}
        \caption{RMSE for estimating the parameter standard deviation.}
	\end{subfigure}
    \caption{Estimated RMSE for the parameter mean and standard deviations.}
    \label{fig:logreg_rmse}
\end{figure}

\section{Conclusion}\label{sec:conclusion}
In this paper we analyzed the computational cost of reaching a given accuracy (relative to the width of the posterior) of SGLD in a simple Gaussian toy model. Our analysis shows that subsampling does not improve the scaling of the computational cost of reaching a given accuracy (relative to the width of the posterior) with the size of the dataset. Stochastic gradient MCMC does not provide a silver bullet. Numerical experiments showed that the same conclusion holds true for stochastic gradient HMC. We also extended our analysis to strongly log-concave targets. 

Our results raise several questions. SGLD performs many sequential updates of low computational cost. Due to the sequential nature of SGLD every single update is hard to parallelize. Using larger batchsizes or the full gradient instead gives greater scope for parallelization per update. In practice the optimal batchsize that minimizes the wall-clock time to reach a given accuracy will depend on the details of the hardware used for the simulation. However the overall amount of computation needed for the same accuracy stays roughly constant. 

In addition, our results raise questions about the good performance of SGMCMC in many machine learning applications. Our results indicated that with a constant batchsize the stepsize should be at most $\mathcal{O}(N^{-2})$. Given this, stepsizes typically used in machine learning application seems large. Perhaps these methods are effectively averaging over stochastic gradient descent rather than faithfully sampling from the posterior. 

\section{Outlook}\label{sec:outlook}

We have found that for standard subsampling  for a fixed batchsize and step-size can be chosen freely. This has interesting practical implications if one considers parallelisability. For small batchsizes we need to perform many cheap steps in sequence, for large batchsizes we perform few expensive updates. Which of these scenarios is better will depend on the model (in particular the parallelisabilty of the gradient computation) and the particular hardware that the computation is being performed on. We conjecture that in many scenarios it will be better to use a large batchsize to reap the benefits of parallelisation. 

Using a constant batchsize $n$ with SGLD requires $h=\mathcal{O}(N^{-2})$ to reach given accuracy $\epsilon$. While we only analyze SGLD in this article, we observed similar behaviour for stochastic gradient Hamiltonian Monte Carlo \cite{Ding2014}, another popular SGMCMC method. 

\section{Acknowledgements}
We thank Yee Whye Teh and Paul Fearnhead for helpful discussions. LH is supported by the UK Engineering and Physical Sciences Research Council through the Oxford Warwick Statistics Programme Centre for Doctoral Training (grant EP/L016710/1). TN and SJV thank EPSRC for funding through EP/N000188/1.
\bibliographystyle{natbib}
\bibliography{refs}
\end{document}